\newcolumntype{+}{!{\vrule width 2pt}}
\newlength\savedwidth
\newcommand\thickhline{\noalign{\global\savedwidth\arrayrulewidth\global\arrayrulewidth 2pt}%
\hline
\noalign{\global\arrayrulewidth\savedwidth}}
\renewcommand{\@biblabel}[1]{\quad#1.}
\newtheorem*{theorem}{Theorem}
\newtheorem{lemma}{Lemma}
\theoremstyle{definition}
\newtheorem*{definition}{Definition}
\newenvironment{case}[1]{\vspace{1ex} \textit{#1}:}
\newcommand{\Prob}{\mathbb{P}}
\newcommand{\E}{\mathbb{E}}
\newcommand{\vx}{\mathbf{x}}
\newcommand{\MSS}{\mathrm{MSS}}
\newcommand{\Ne}{N_\mathrm{eff}}
\begin{document}

\title{Transient amplifiers of selection and reducers of fixation for death-Birth updating on graphs}

\author[1,2]{Benjamin Allen}
\author[1]{Christine Sample}
\author[1]{Robert Jencks}
\author[1]{James Withers}
\author[1]{Patricia Steinhagen}
\author[1]{Lori Brizuela}
\author[1]{Joshua Kolodny}
\author[1]{Darren Parke}
\author[3]{Gabor Lippner}
\author[1]{Yulia A. Dementieva}

\affil[1]{Emmanuel College, Boston, MA, USA}
\affil[2]{Program for Evolutionary Dynamics, Harvard University, Cambridge, MA, USA}
\affil[3]{Department of Mathematics, Northeastern University, Boston, MA, USA}

\date{}

\maketitle

\begin{abstract}
The spatial structure of an evolving population affects which mutations become fixed.  Some structures amplify selection, increasing the likelihood that beneficial mutations become fixed while deleterious mutations do not.  Other structures suppress selection, reducing the effect of fitness differences and increasing the role of random chance. This phenomenon can be modeled by representing spatial structure as a graph, with individuals occupying vertices. Births and deaths occur stochastically, according to a specified update rule.  We study \emph{death-Birth} updating: An individual is chosen to die and then its neighbors compete to reproduce into the vacant spot. Previous numerical experiments suggested that amplifiers of selection for this process are either rare or nonexistent. We introduce a perturbative method for this problem for weak selection regime, meaning that mutations have small fitness effects. We show that fixation probability under weak selection can be calculated in terms of the coalescence times of random walks. This result leads naturally to a new definition of effective population size.  Using this and other methods, we uncover the first known examples of transient amplifiers of selection (graphs that amplify selection for a particular range of fitness values) for the death-Birth process. We also exhibit new families of ``reducers of fixation", which decrease the fixation probability of all mutations, whether beneficial or deleterious. 
\end{abstract}

\section*{Introduction}

Spatial population structure has a variety of effects on natural selection \cite{NowakMay,HauertSnowdrift,Lieberman2005Graphs,Ohtsuki,allen2017evolutionary}.  These effects can be studied mathematically by representing spatial structure as a graph \cite{Lieberman2005Graphs}. The vertices represent individuals, and the edges indicate spatial relationships between them.  This modeling approach, known as evolutionary graph theory, has illuminated the effects of spatial structure on the rate of genetic change \cite{Allen2015molecular}, the balance of selection versus neutral drift \cite{Lieberman2005Graphs,shakarian2012review,pavlogiannis2018construction}, and the evolution of cooperation and other social behaviors \cite{Ohtsuki,Taylor,chen2013sharp,debarre2014social,durrett2014spatial,pena2016evolutionary,allen2017evolutionary,fotouhi2018conjoining}.  

Here we focus on how spatial structure affects fixation probability---the probability that a new mutation will spread throughout the population, depending on its effect on fitness.  Previous work \cite{Lieberman2005Graphs,antal2006evolutionary,Broom2008Analysis,frean2008death,shakarian2012review,voorhees2013fixation,monk2014martingales,adlam2015amplifiers,cuesta2015fast,hindersin2015most,jamieson2015fixation,pavlogiannis2017amplification,cuesta2017suppressors,pavlogiannis2018construction,cuesta2018evolutionary} has shown that some graphs act as \emph{amplifiers} of selection, increasing the fixation probability of beneficial mutations, while reducing that of deleterious mutations.  Other graphs act as \emph{suppressors} of selection, increasing the fixation probability of deleterious mutations and reducing that of beneficial mutations.  Over time, a population that is structured as an amplifier will more rapidly accrue beneficial mutations, whereas one structured as a suppressor will experience greater effects of random drift.

To be precise, the terms amplifier and suppressor cannot be ascribed solely to a graph itself. Fixation probabilities also depend on the \emph{update rule:} the scheme by which births and deaths are determined.  The majority of works on amplifiers and suppressors use \emph{Birth-death (Bd)} updating: An individual is selected to reproduce proportionally to fitness, and its offspring replaces a uniformly-chosen neighbor.  A minority of works \cite{frean2008death,kaveh2015duality,pattni2015evolutionary,hindersin2015most} have considered \emph{death-Birth (dB)} updating: A uniformly-chosen individual dies, and a neighbor is chosen proportionally to fitness to reproduce into the vacancy. (Following Hindersin and Traulsen \cite{hindersin2015most}, we use uppercase letters for a demographic step that is affected by fitness, and lowercase letters for a step that is fitness-independent.)  Interestingly, the choice of update rule has a marked effect on fixation probabilities.  For example, the Star graph (Fig.~\ref{fig:completestarcycle}B) is an amplifier of selection for Bd updating \cite{Lieberman2005Graphs,Broom2008Analysis}, but a suppressor for dB updating \cite{frean2008death}.  

\begin{figure}
\begin{center}
\includegraphics[scale=0.8]{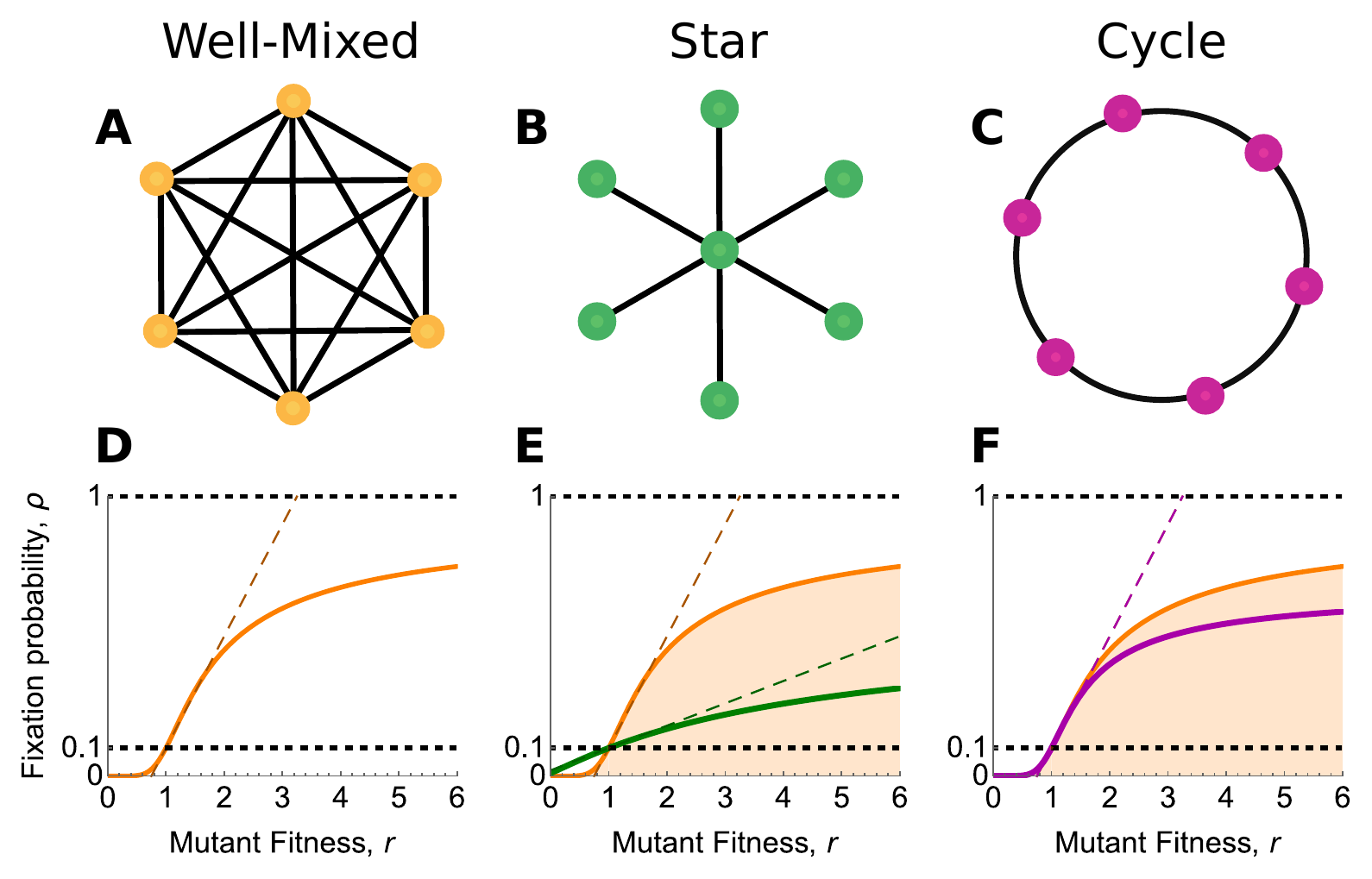}
\caption{\textbf{Fixation probabilities for constant selection on graphs.} \textbf{(A}) The complete graph $K_N$ represents a well-mixed population. \textbf{(B)} The star $S_n$ consists of one hub vertex connected to $n$ leaf vertices.  This star is a suppressor of selection for death-Birth updating \cite{frean2008death}. \textbf{(C)} The cycle $C_N$, a regular graph of degree 2, is a reducer of fixation: the fixation probability of any mutant type of fitness $r\neq 1$ is smaller than it would be in the well-mixed case \cite{hindersin2015most}. Panels \textbf{(D)--(F)} plot fixation probability versus mutant fitness for the respective graphs, for size $N=10$, with the well-mixed case (orange curve) shown for comparison. Dashed lines show the linear approximation to fixation probability at $r=1$.  These approximations are accurate for weak selection ($r \approx 1$) and can be computed from coalescence times using Eqs.~\eqref{eq:trecur}--\eqref{eq:Ne}.}
\label{fig:completestarcycle}
\end{center}
\end{figure}

A recent numerical investigation \cite{hindersin2015most} of thousands of random graphs up to size 14 found no amplifiers of selection for death-Birth updating.  This suggests that amplifiers for dB are either nonexistent or rare, at least among small graphs.  This work also identified a graph (the cycle; Fig.~\ref{fig:completestarcycle}C) that, for dB updating, reduces fixation probabilities for \emph{all} mutations that affect fitness, whether beneficial or deleterious. The cycle is therefore neither an amplifier nor a suppressor; it might instead be called a ``reducer of fixation", in that it preserves the resident wild-type regardless of fitness effects.  A follow-up work \cite{hindersin2016should} identified other reducers of fixation.

Here we investigate fixation probabilities for death-Birth updating on graphs, using a variety of analytical and numerical methods.  We develop a weak-selection approach to this question, based on coalescing random walk methods \cite{cox1989coalescing,liggett2006interacting} that were previously used to study evolutionary games on graphs \cite{chen2013sharp,allen2017evolutionary,fotouhi2018conjoining}.  Weak selection means that the fitness of the mutant is close to that of the resident; i.e., the mutation is either slightly beneficial or slightly deleterious.  Unlike earlier numerical methods \cite{hindersin2015most,hindersin2016exact,cuesta2017suppressors}, the weak-selection method can be performed in polynomial time, allowing for efficient identification of amplifiers and suppressors of weak selection.  We apply this method to several graph families and random graph models.  We also compute fixation probabilities for arbitrary mutant fitness (beyond weak selection) for these graph families.

We find, contrary to the expectation set by previous numerical experiments \cite{hindersin2015most}, that amplifiers, of a sort, do exist for death-Birth updating.  Specifically, we exhibit several families of \emph{transient amplifiers}, which amplify selection only for a certain range of mutant fitness values. We also uncover new examples of reducers of fixation. 

Our weak-selection method also leads to new theoretical results. First, the form of our expression for fixation probability suggests a new definition of effective population size, with intriguing connections to previous definitions \cite{wright1931evolution,kimura1963measurement,felsenstein1971inbreeding,crow1988inbreeding,antal2006evolutionary,broom2012two,allen2013adaptive,giaimo2018invasion}. Second, we show that for \emph{isothermal} graphs---which have the same edge weight sum at each vertex---the fixation probability coincides, under weak selection, with that of a well-mixed population.  This result is reminiscent of the Isothermal Theorem of Lieberman et al.~\cite{Lieberman2005Graphs}, which applies to Bd updating (see also Refs.~\cite{kaveh2015duality,pattni2015evolutionary}).  However, whereas the original Isothermal Theorem is valid for any strength of selection, our new result applies only to weak selection. Third, we show that fixation probabilities under weak selection can be well-approximated using only the first two moments of the degree distribution. This approximation helps explain why amplifiers of selection (even transient ones) are rare for dB updating.

\section*{Model}

We study an established model of natural selection on graphs \cite{Lieberman2005Graphs,antal2006evolutionary,Broom2008Analysis,frean2008death,shakarian2012review,voorhees2013fixation,monk2014martingales,adlam2015amplifiers,cuesta2015fast,hindersin2015most,jamieson2015fixation,kaveh2015duality,pattni2015evolutionary,pavlogiannis2017amplification,cuesta2017suppressors,hathcock2018fitness,pavlogiannis2018construction,cuesta2018evolutionary,moller2019exploring,tkadlec2019population}. Spatial structure is represented as a connected, weighted, undirected graph $G$.  Joining each pair of vertices $i$ and $j$ is an edge of weight $w_{ij} \geq 0$, with $w_{ij}=w_{ji}$ since $G$ is undirected.  We exclude the possibility of self-loops by setting $w_{ii}=0$ for each vertex $i$.  The size of the graph, which is also the population size, is denoted $N$.

Each vertex houses a single haploid individual.  Individuals can be of mutant or resident (wild-) type.  Mutants have fitness $r>0$, while the fitness of the resident type is set to 1.  Advantageous mutants have $r>1$, while deleterious mutants have $r<1$.  The case $r=1$ describes neutral drift, for which the mutation has no fitness effect.  This model describes \emph{constant selection}, in that the fitnesses of the competing types do not vary with the current population state.

Selection proceeds according to the death-Birth (dB) update rule \cite{Ohtsuki,frean2008death,allen2014games}.  First, an individual is selected uniformly at random for death, creating a vacant vertex.  Then, a neighbor of the vacant vertex is chosen to reproduce, with probability proportional to (fitness) $\times$ (edge weight to the vacant vertex).  The new offspring fills the vacancy, inheriting the type of the parent.

As an initial state, we suppose that a single mutant is introduced, at a vertex chosen uniformly at random, in a population otherwise composed of residents.  We define the mutation's \emph{fixation probability} as the expected probability that a state of all mutants is reached from this initial condition.  The fixation probability of a mutation of fitness $r$ on a graph $G$ is denoted $\rho_G(r)$.  

The baseline case of a well-mixed population is represented by the complete graph $K_N$ of size $N$ (Fig \ref{fig:completestarcycle}A).  For dB updating on $K_N$, a mutant of fitness $r$ has fixation probability \cite{kaveh2015duality,hindersin2015most}
\begin{equation}
\label{eq:WM}
\rho_{K_N}(r) = \frac{N-1}{N} \frac{1-r^{-1}}{1-r^{-(N-1)}}.
\end{equation}
We characterize the effects of graph structure on fixation probabilities using the following definitions:
\begin{definition}
Let $G$ be a graph of size $N$.  Then $G$ is
\begin{itemize}
    \item An \emph{amplifier of selection} if $\rho_G(r) < \rho_{K_N}(r)$ for $0<r<1$ and $\rho_G(r) > \rho_{K_N}(r)$ for $r>1$.
    \item A \emph{suppressor of selection} if $\rho_G(r) > \rho_{K_N}(r)$ for $0<r<1$ and $\rho_G(r) < \rho_{K_N}(r)$ for $r>1$.
    \item A \emph{transient amplifier of selection} if there is some $r^*>1$ such that $\rho_G(r) < \rho_{K_N}(r)$ for  $0<r<1$ and for $r>r^*$, and $\rho_G(r) > \rho_{K_N}(r)$ for $1<r<r^*$.
    \item A \emph{reducer of fixation} if $\rho_G(r) < \rho_{K_N}(r)$ for all $r \neq 1$.
\end{itemize}
\end{definition}
For example, the star graph $S_n$ with $n$ leaves (population size $N=n+1$; Fig.~\ref{fig:completestarcycle}B) is a suppressor of selection for dB updating \cite{frean2008death}, with fixation probability \cite{hadjichrysanthou2011evolutionary}
\begin{equation}
\label{eq:starfix}
\rho_{S_n}(r)=\frac{(N-1)r+1}{N(r+1)} \left( \frac{1}{N} + \frac{r}{N+2r-2} \right).
\end{equation}
The cycle $C_N$ is a reducer of fixation for dB updating \cite{hindersin2015most}, with fixation probability \cite{kaveh2015duality}
\begin{equation}
\label{eq:cyclefix}
\rho_{C_N}(r) = \frac{2(r-1)}{3r-1+r^{-(N-1)}-3r^{-(N-2)}}.
\end{equation}
Other examples of reducers were identified by Hindersin et al.~\cite{hindersin2016should}, who called them ``suppressors of evolution"; we prefer ``reducers of fixation" to avoid confusion with suppressors of selection. 

A companion work \cite{Pepa} proves that there are no (non-transient) amplifiers of selection for dB updating. Transient amplifiers of selection were previously known for Bd updating \cite{voorhees2013fixation} but not for dB updating. For Bd updating, there are some graphs that do not fit any of the above definitions, but alternate between amplification and suppression (i.e,~$\rho_G(r)>\rho_{K_N}(r)$ on a disconnected set of $r$-values) \cite{cuesta2018evolutionary}; such examples have not been discovered for dB updating.

\section*{Results}

\subsection*{Fixation probability under weak selection}

Fixation probabilities on graphs can be difficult to compute.  Current numerical methods \cite{cuesta2015fast,hindersin2015most,hindersin2016exact,cuesta2017suppressors} involve solving a system of $\mathcal{O}(2^N)$ equations to compute fixation probabilities on a given graph of size $N$.  For this reason, previous analyses have focused on small graphs \cite{hindersin2015most,hindersin2016exact,cuesta2017suppressors,cuesta2018evolutionary,moller2019exploring,tkadlec2019population} and/or graphs with a high degree of symmetry \cite{Lieberman2005Graphs,Broom2008Analysis,shakarian2012review,voorhees2013fixation,monk2014martingales,adlam2015amplifiers,jamieson2015fixation,pavlogiannis2017amplification}.

One way to mitigate these difficulties is to focus on \emph{weak selection}, which is the regime $r \approx 1$.  Weak selection can be studied as a perturbation of neutral drift ($r=1$). This approach has been fruitfully applied to population genetics \cite{Haldane1924mathematical,kimura1962probability,akashi2012weak} and evolutionary game theory \cite{NowakFinite,Ohtsuki,chen2013sharp,allen2014games,debarre2014social,allen2017evolutionary,fotouhi2018conjoining}, but so far has not been applied to models of constant selection on graphs.

To implement weak selection for our model, we write the fitness of the mutant as $r=1+\delta$, with $\delta$ representing the mutation's selection coefficient. We consider the first-order Taylor expansion of the fixation probability, $\rho_G(1+\delta)$, at $\delta=0$.  For the complete graph, Taylor expansion of Eq.~\eqref{eq:WM} yields
\begin{equation}
\label{eq:WMexpand}
\rho_{K_N}(1 + \delta) = \frac{1}{N} + \delta \frac{N-2}{2N} + \mathcal{O}(\delta^2).
\end{equation}

For an arbitrary weighted, connected graph, we apply a method developed by Allen et al.~\cite{allen2017evolutionary} to calculate fixation probabilities under weak selection. This method uses \emph{coalescing random walks}, which trace the co-ancestry of given individuals backwards in time to their most recent common ancestor.

Each individual's ancestry is represented as a random walk on $G$.  These random walks are defined by the step probabilities $p_{ij} = w_{ij}/w_i$, where $w_i = \sum_{j \in G} w_{ij}$ is the \emph{weighted degree} of vertex $i$. Importantly, $p_{ij}$ is also equal to the conditional probability, under neutral drift ($r=1$), that $j$ reproduces, given that $i$ is replaced.  Random walks on $G$ have a stationary distribution, in which the probability of being at vertex $i$ is equal to its relative weighted degree, $\pi_i=w_i/\left(\sum_{j \in G} w_j \right)$.

To represent the co-ancestry of two individuals, we consider a pair of random walkers.  At each time-step, one of the two walkers is chosen (with equal probability) to take a step.  The point at which the two walkers meet (coalesce) represents the most recent common ancestor.  We let $\tau_{ij}$ denote the expected time to coalescence from initial vertices $i$ and $j$.  These coalescence times can be determined from the following system of equations \cite{allen2017evolutionary,allen2018mathematical}:
\begin{equation}
\label{eq:trecur}
\tau_{ij} = \begin{cases} 0 & i=j\\
1 + \frac{1}{2}  \sum_{k \in G} \left(p_{ik} \tau_{jk} + p_{jk} \tau_{ik} \right)& i \neq j.
\end{cases}
\end{equation}

We also define the \emph{remeeting time} $\tau_i$ from vertex $i$ as the expected time for two random walkers from vertex $i$ to rejoin each other.  Remeeting times are related to coalescence times by
\begin{equation} 
\label{eq:ti}
\tau_i=1+\sum_{j\in G}p_{ij}\tau_{ij},
\end{equation}
and obey the identity \cite{allen2017evolutionary}
\begin{equation}
\label{eq:pi2tau}
\sum_{i \in G} \pi_i^2 \tau_i = 1.
\end{equation}

Applying the properties of coalescence times, we prove in Appendix \ref{sec:weak} that fixation probability on an arbitrary (weighted, undirected, connected) graph $G$ can be expanded under weak selection as
\begin{equation}
\label{eq:rhoexpand}
\rho_G(1 + \delta) = \frac{1}{N} + \delta \frac{\Ne-2}{2N} + \mathcal{O}(\delta^2),
\end{equation}
where $\Ne$ is the \emph{effective population size} of $G$, which we define as
\begin{equation}
\label{eq:Ne}
\Ne=\sum_{i \in G} \pi_i \tau_i.
\end{equation}
This definition of effective population size is distinct from, but closely related to, previous definitions \cite{wright1931evolution,kimura1963measurement,felsenstein1971inbreeding,crow1988inbreeding,antal2006evolutionary,broom2012two,allen2013adaptive,giaimo2018invasion}, as we review in the Discussion.

Comparing the first-order terms in Eqs.~\eqref{eq:rhoexpand} and \eqref{eq:WMexpand} provides a criterion for the effects of graph structure on fixation probabilities under weak selection:
\begin{definition}
Let $G$ be a graph of size $N$. We say $G$ is 
\begin{itemize}
    \item An \emph{amplifier of weak selection} if $\Ne>N$,
    \item A \emph{suppressor of weak selection} if $\Ne<N$.
\end{itemize}
\end{definition}
An amplifier (respectively, suppressor) of weak selection is guaranteed to amplify (respectively, suppress) selection for all $r$ sufficiently close to 1. Formally, if $G$ is an amplifier of weak selection, there exist $a,b$ with $0 \leq a<1<b \leq \infty$ such that $\rho_G(r) < \rho_{K_N}(r)$ for $a<r<1$ and $\rho_G(r) > \rho_{K_N}(r)$ for $1<r<b$.  Likewise, if $G$ is a suppressor of weak selection, there exist $a,b$ with $0 \leq a<1<b \leq \infty$ such that $\rho_G(r) > \rho_{K_N}(r)$ for $a<r<1$ and $\rho_G(r) < \rho_{K_N}(r)$ for $1<r<b$.  

As an example, solving Eq.~\eqref{eq:trecur} for the star graph $S_n$, and applying Eqs.~\eqref{eq:ti} and \eqref{eq:Ne}, we obtain $\tau_H=\tau_L=\Ne=4n/(n+1)$.  Since the star graph has size $N=n+1$, we find that the star is a suppressor of weak selection for all $n\geq2$.  Substituting in Eq.~\eqref{eq:rhoexpand}, we obtain
\begin{equation}
\rho_G(1 + \delta) = \frac{1}{N} + \delta \frac{N-2}{N^2} + \mathcal{O}(\delta^2),
\end{equation}
which agrees with the Taylor expansion of Eq.~\eqref{eq:starfix}.

\subsection*{Weak-selection Isothermal Theorem}

A particularly interesting result arises in the special case of \emph{isothermal} graphs.  An unweighted graph $G$ is isothermal if each vertex has the same weighted degree $w_i$, or equivalently, if $\pi_i=1/N$ for each $i\in G$. The \emph{Isothermal Theorem} \cite{Lieberman2005Graphs} states that, for Bd updating, an isothermal graph has the same fixation probabilities as a well-mixed population of the same size, for all values of $r$ and all starting configurations of mutants.  However, the corresponding statement for dB updating is false \cite{kaveh2015duality,pattni2015evolutionary}.  For example, the cycle (Fig.~\ref{fig:completestarcycle}C) is isothermal, but its fixation probabilities, as given by Eq.~\eqref{eq:cyclefix}, differ from those of a well-mixed population, given by Eq.~\eqref{eq:WM}.

Here we show that a weak-selection version of the isothermal theorem holds for death-Birth updating.  For an isothermal graph $G$, Eqs.~\eqref{eq:pi2tau} and \eqref{eq:Ne} give
\begin{equation}
\Ne = \sum_{i\in G} \left( \frac{1}{N}\right)\tau_i
= N \sum_{i\in G} \left( \frac{1}{N^2}\right)\tau_i
= N \sum_{i\in G} \pi_i^2 \tau_i = N.
\end{equation}
Combining with Eq.~\eqref{eq:rhoexpand}, we arrive at the following result:

\begin{theorem}[Weak-selection Isothermal Theorem for dB Updating]
Let $G$ be a weighted, undirected, connected isothermal graph of size $N \geq 2$ with no self-loops.  Then for dB updating, fixation probabilities on $G$ coincide with those on the complete graph $K_N$ to first order in the selection coefficient $\delta$:
\begin{equation}
\rho_G(1+\delta) = \rho_{K_N}(1+\delta) + \mathcal{O}(\delta^2).
\end{equation}
\end{theorem}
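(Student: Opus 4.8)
The plan is to treat the theorem as an immediate corollary of the weak-selection expansion in Eq.~\eqref{eq:rhoexpand}, so that the entire content reduces to verifying a single identity: that the effective population size of an isothermal graph equals its actual size, $\Ne = N$. Once that is in hand, substituting $\Ne = N$ into Eq.~\eqref{eq:rhoexpand} produces exactly the expansion recorded in Eq.~\eqref{eq:WMexpand} for the complete graph, and comparing the two expansions term-by-term through first order in $\delta$ yields the claimed equality $\rho_G(1+\delta) = \rho_{K_N}(1+\delta) + \mathcal{O}(\delta^2)$.

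The core step is therefore the computation of $\Ne$, and here I would begin from the defining property of an isothermal graph, namely that the stationary distribution is uniform, $\pi_i = 1/N$ for every vertex $i$. Starting from the definition in Eq.~\eqref{eq:Ne}, I would write
\begin{equation*}
\Ne = \sum_{i \in G} \pi_i \tau_i = \sum_{i \in G} \frac{1}{N}\,\tau_i = N \sum_{i \in G} \frac{1}{N^2}\,\tau_i = N \sum_{i \in G} \pi_i^2 \tau_i,
\end{equation*}
where the final equality reinserts $\pi_i = 1/N$ inside the sum. The identity in Eq.~\eqref{eq:pi2tau}, valid for any graph, asserts that $\sum_{i \in G} \pi_i^2 \tau_i = 1$; applying it collapses the right-hand side to $\Ne = N$, which is precisely what is needed.

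Given the machinery already established, I do not expect a genuine obstacle here: the heavy lifting lives entirely in the derivation of Eq.~\eqref{eq:rhoexpand} (the coalescing-random-walk expansion of fixation probability, proved in the Appendix) and in the identity Eq.~\eqref{eq:pi2tau}, both of which the theorem is free to invoke. The one point I would emphasize, rather than a difficulty, is a subtlety about scope: the argument uses \emph{only} the first-order term of the expansion and makes no claim at second order or beyond. This restriction is essential, not incidental, since isothermal graphs such as the cycle $C_N$ provably fail to share the complete graph's fixation probabilities at general $r$ (compare Eqs.~\eqref{eq:cyclefix} and \eqref{eq:WM}). Thus the result is sharp in the expected sense: isothermality forces the neutral fixation value and the selection gradient at $r = 1$ to match those of the well-mixed population, but it constrains nothing about the higher-order behavior.
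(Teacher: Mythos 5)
Your proposal is correct and is essentially identical to the paper's own proof: the paper likewise substitutes $\pi_i = 1/N$ into Eq.~\eqref{eq:Ne}, rewrites the sum as $N\sum_{i\in G}\pi_i^2\tau_i$, invokes the identity of Eq.~\eqref{eq:pi2tau} to conclude $\Ne = N$, and then compares Eq.~\eqref{eq:rhoexpand} with Eq.~\eqref{eq:WMexpand}. Your closing remark about sharpness (the cycle showing that agreement cannot extend beyond first order) matches the discussion the paper gives immediately after the theorem.
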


In other words, if $G$ is isothermal, then the plot of $\rho_G(r)$ is tangent to that of $\rho_{K_N}(r)$  at $r=1$.  This implies that isothermal graphs are neither amplifiers nor suppressors of weak selection.  For example, the cycle $C_N$ (Fig.~\ref{fig:completestarcycle}C),  is isothermal, and therefore the plots of $\rho_{C_N}(r)$ and $\rho_{K_N}(r)$ are tangent at $r=1$ (Fig.~\ref{fig:completestarcycle}F).  However, these plots do not coincide beyond $r=1$; instead, $\rho_{C_N}(r)<\rho_{K_N}(r)$ for all $r\neq 1$ \cite{hindersin2016exact}, meaning that the cycle is a reducer of fixation.

\subsection*{Examples}

We now introduce three example families of graphs, which can behave as transient amplifiers, suppressors, or reducers, depending on the parameter values.  We analyze these graphs both for weak and nonweak selection. Our results are summarized in Table \ref{tab:results}; detailed calculations and proofs are presented in Appendix \ref{sec:examples}.

\begin{table}[!ht]
    \centering
    \begin{threeparttable}
    \caption{\bf{Results for example graphs}}
    \label{tab:results}
    \begin{tabular}{|l+l|l|l|}
    \hline
    \textbf{Example} & \textbf{Case} & \textbf{Classification}\\
    \thickhline
    Separated Hubs\tnote{*} & $n \leq h$ & Suppressor\\
     ($\epsilon \to 0$)   & $n=h+1$ & Reducer\\
        & $n \geq h+2$ & Transient amplifier\\
    \hline
    Star of Islands & $m \leq h-1$ & Suppressor\tnote{**}\\
     ($\epsilon \to 0$)   & $m=h$ & Reducer\\
        & $m \geq h+1$ & Transient amplifier\tnote{**}\\
        \hline
    \end{tabular}
    \begin{tablenotes}
    \item[*] The Fan is the $h=1$ case of separated hubs.
    \item[**] Proven only for weak selection (other cases are proven for arbitrary selection strength).
    \end{tablenotes}
\end{threeparttable}
\end{table}

\subsubsection*{Fan}

\begin{figure}
    \centering
    \includegraphics[scale=0.6]{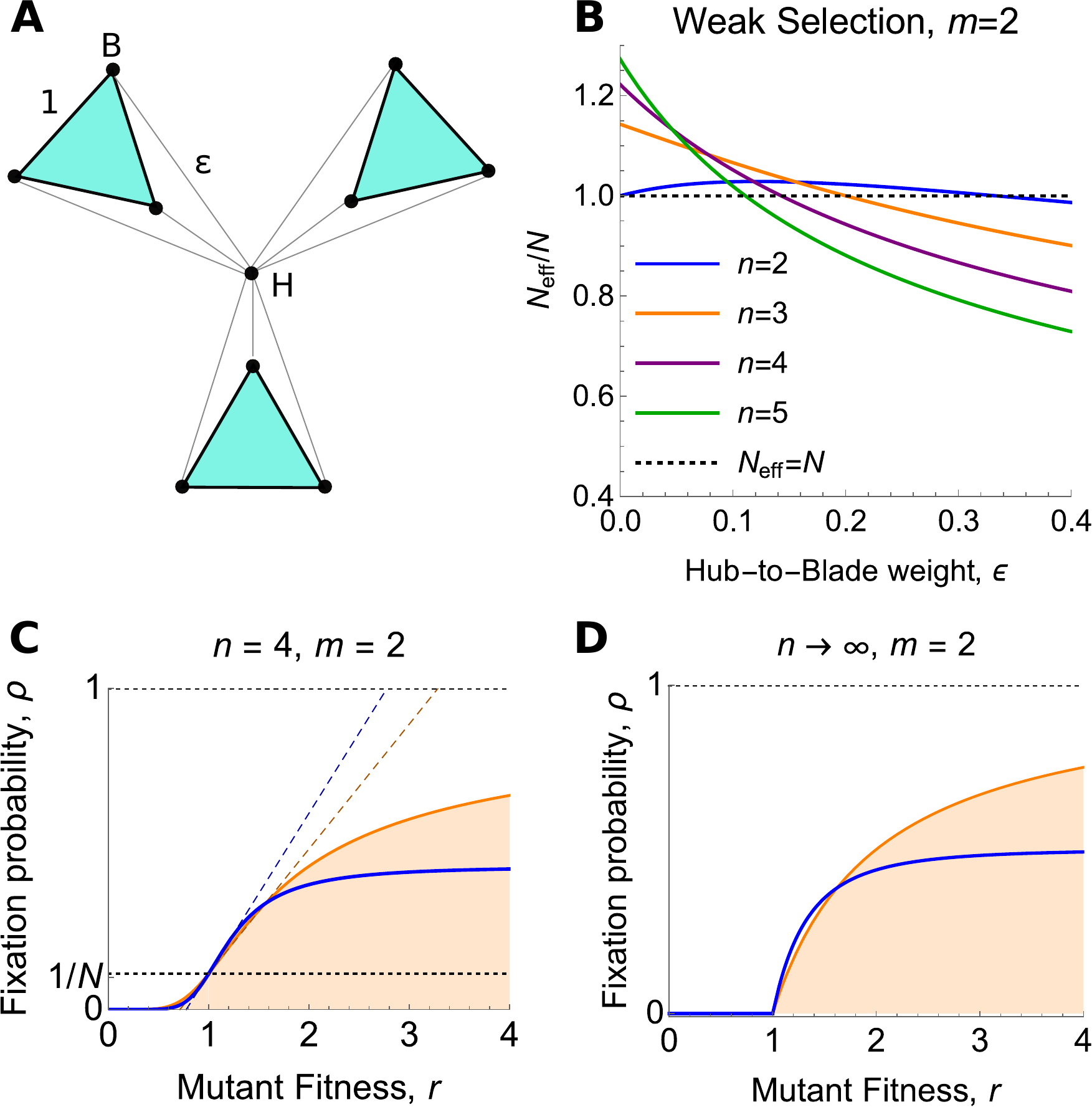}
    \caption{\textbf{The Fan} \textbf{(A)} The Fan, $F_{n,m}$, consists of one hub and $n \geq 2$ ``blades", with $m \geq 2$ vertices per blade. Edge weights are as shown. The case $n=m=3$ is pictured. \textbf{(B)} The ratio of effective versus actual population size, plotted against the hub-to-blade edge weight $\epsilon$, for $m=2$ vertices per blade. For $n=2$ blades, the Fan is an amplifier of weak selection for $0<\epsilon<1/3$, but becomes a reducer in the $\epsilon \to 0$ limit. For $n \geq 3$, the Fan is a transient amplifier for sufficiently small $\epsilon$, including the $\epsilon \to 0$ limit. \textbf{(C)} Fixation probability for $F_{4,2}$ (blue curve), plotted against mutant fitness $r$, in the $\epsilon \to 0$ limit, according to Eq.~\eqref{eq:Fan_fixprob}. The orange curve shows the corresponding well-mixed population result, Eq.~\eqref{eq:WM}, for comparison. Dotted lines show the corresponding weak-selection results (i.e. the linear approximation at $r=1$), according to Eqs.~\eqref{eq:WMexpand},\eqref{eq:rhoexpand}, and \eqref{eq:NeFan}. \textbf{(D)} In the $n \to \infty$ limit, fixation probability is given by Eq.~\eqref{eq:Fan_fixprob}, and the Fan is an amplifier for $1<r<(1+\sqrt{5})/2$. }
    \label{fig:Fan}
\end{figure}

The Fan, $F_{n,m}$, (Fig.~\ref{fig:Fan}) has one hub and $n \geq 2$ blades.  Each blade contains $m \geq 2$ vertices, for a total of $N=nm+1$ vertices.  Each blade vertex is joined to the hub by an edge of weight $\epsilon>0$, and is joined to each other vertex on the same blade by an edge of weight 1. The Fan is isothermal when $\epsilon=(m-1)/(nm-1)$.

Applying our weak-selection method, we find that the Fan has effective population size
\begin{equation}
\label{eq:NeFan}
\Ne=\frac{m n (m n+4 \epsilon -1) (m (m n+\epsilon -1)-\epsilon )}{(m-1) \epsilon ^2 (m n+1)+\epsilon  (m (n+2)-1) (m n-1)+m (m n-1)^2}.
\end{equation}
Comparing to $N=nm+1$, we find that the Fan amplifies weak selection for all $0<\epsilon<(m-1)/(nm-1)$ (Fig.~\ref{fig:Fan}B).  An interesting behavior occurs in the limit $\epsilon \to 0$: For $n\ge 3$ blades the Fan amplifies weak selection ($\lim_{\epsilon \to 0} \Ne > N$), but for $n=2$ blades the Fan neither amplifies nor suppresses weak selection ($\lim_{\epsilon \to 0}\Ne=N$).  The strongest amplifier of weak selection (largest $\Ne/N$) occurs for $m=2$ and first $\epsilon \to 0$ and then $n \to \infty$; in this case, $\Ne/N \to 3/2$.

Moving beyond weak selection, we calculate the fixation probability for a mutation of arbitrary fitness $r>0$, in the $\epsilon \to 0$ limit:
\begin{equation}
\label{eq:Fan_fixprob}
\rho_{F_{n,m}}(r)=\frac{n(m-1) \left(1-r^{-1} \right) \left(1-r^{-(m+1)} \right)}
{( mn+1) \left(1-r^{-(m-1)}\right)\left(1-r^{-n(m+1)}\right)}.
\end{equation}
We prove in Appendix \ref{sect:fan} that, in the $\epsilon \to 0$ limit, the Fan is a reducer of fixation for $n=2$ and a transient amplifier of selection for all $n \geq 3$.

\subsubsection*{Separated Hubs}

\begin{figure}
    \centering
    \includegraphics[scale=0.6]{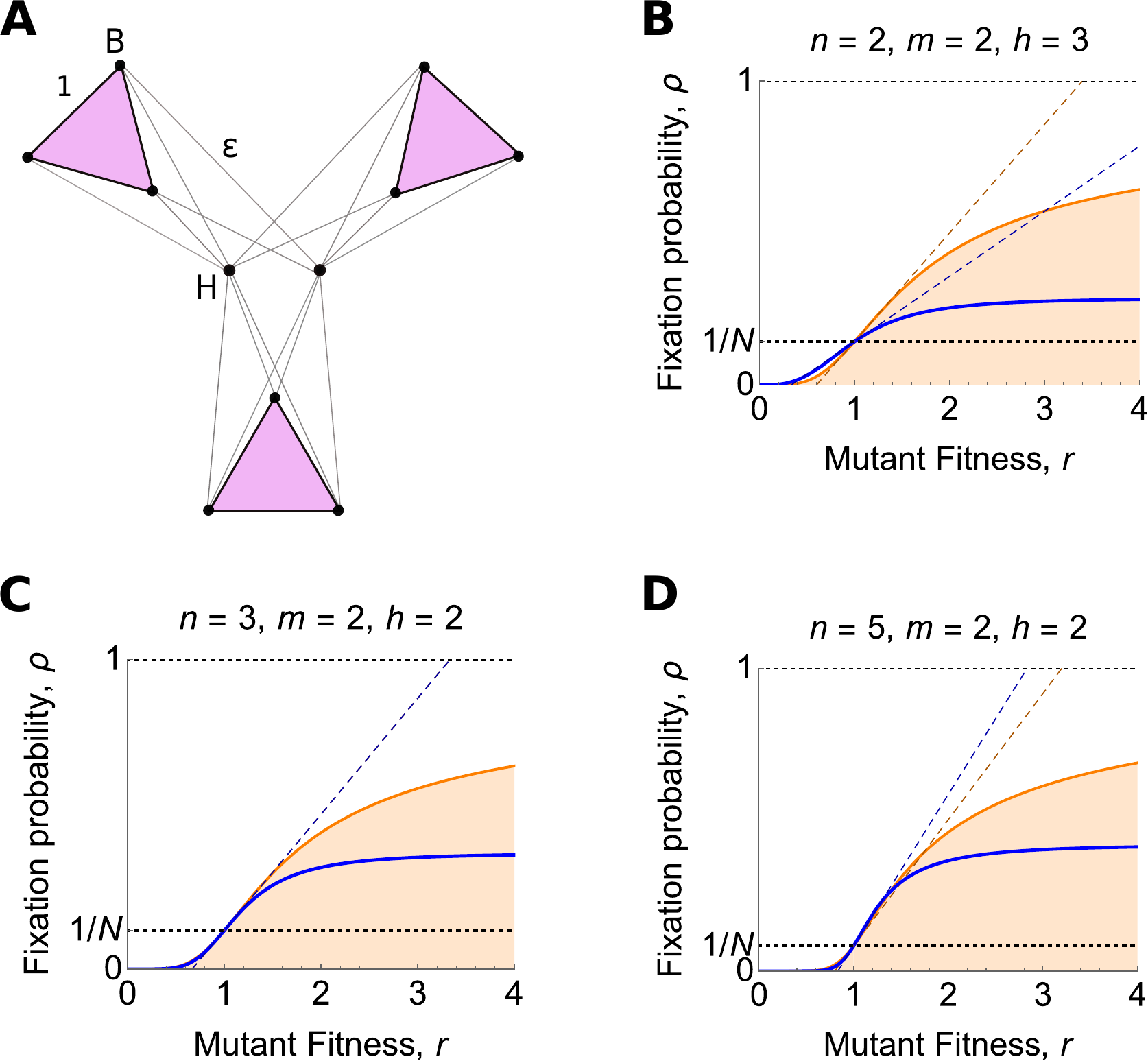}
    \caption{\textbf{Separated Hubs} \textbf{(A)} The Separated Hubs graph consists of $h \geq 1$ hubs and $n \geq 1$ blades, with $m \geq 2$ vertices per blade. Edge weights are as shown. \textbf{(B)--(D)} Blue curves show fixation probability, Eq.~\eqref{eq:SH_fixprob}, plotted against mutant fitness $r$, in the $\epsilon \to 0$ limit. Blue dotted lines show the weak selection result, Eqs.~\eqref{eq:rhoexpand} and \eqref{eq:SH_Ne}. The orange curve and dotted line show the corresponding well-mixed population results, Eqs.~\eqref{eq:WM} and \eqref{eq:WMexpand}, for comparison. The Separated Hubs graph is (B) a suppressor for $n \le h$, (C) a reducer for $n=h+1$, and (D) a transient amplifier for $n\ge h+2$.}
    \label{fig:SH}
\end{figure}

Our next examples generalize the Fan graph in two different ways. First, we suppose that there are multiple hub vertices, which are not connected to each other.  The resulting graph, which we call the \emph{Separated Hubs} graph, $SH_{n,m,h}$, has $h\geq 1$ hub vertices, $n\geq 2$ blades, and $m \geq 2$ vertices per blade (Fig.~\ref{fig:SH}).  Vertices on the same blade are connected by edges of weight 1, and each blade vertex is connected to each hub by an edge of weight $\epsilon$. No other edges are present. The Fan is the $h=1$ case of Separated Hubs.

The weak-selection results for arbitrary $\epsilon$ are rather cumbersome, but in the $\epsilon \to 0$ limit they simplify to
\begin{equation}
\label{eq:SH_Ne}
\Ne=nm + n - 1.
\end{equation}
Interestingly, in this limit, the effective population size is independent of the number $h$ of hubs. Comparing Eq.~\eqref{eq:SH_Ne} to the population size, $N=nm+h$, we observe that the Separated Hubs graph (in the $\epsilon \to 0$ limit) is a suppressor of weak selection for $n \leq h$ and an amplifier of weak selection for $n\geq h+2$. Again, the strongest amplifier of weak selection occurs for $m=2$ and first $\epsilon \to 0$ and then $n \to \infty$, leading to $\Ne/N \to 3/2$.  The strongest suppressor of weak selection (smallest $\Ne/N$) occurs for first $\epsilon \to 0$ and then $h\to \infty$, leading to $\Ne/N \to 0$.

Beyond weak selection, we compute the fixation probability for arbitrary $r>0$ in the limit $\epsilon \to 0$:
\begin{align}
\label{eq:SH_fixprob}
\rho_{SH_{n,m,h}}(r)=\frac{n(m-1)(1-r^{-1})\left(1-r^{-(m+1)}\right)}
{( mn+h)\left(1-r^{-(m-1)}\right)\left(1-r^{-n(m+1)}\right)}.
\end{align}
In the limit of many blades, we obtain
\begin{equation}
\lim_{n \to \infty} \rho_{SH_{n,m,h}}(r) =
\begin{cases}
0 & 0\leq r\leq 1\\
\frac{m-1}{m} \frac{(1-r^{-1})(1-r^{-(m+1)})}{1-r^{-(m-1)}}
 &r>1.
\end{cases}
\end{equation}

We prove in Appendix \ref{app:SI} that the Separated Hubs graph, in the $\epsilon \to 0$ limit, is a suppressor for $n \le h$, a transient amplifier for $n \ge h+2$, and a reducer for $n=h+1$.

\subsubsection*{Star of Islands}

\begin{figure}
    \centering
    \includegraphics[scale=0.6]{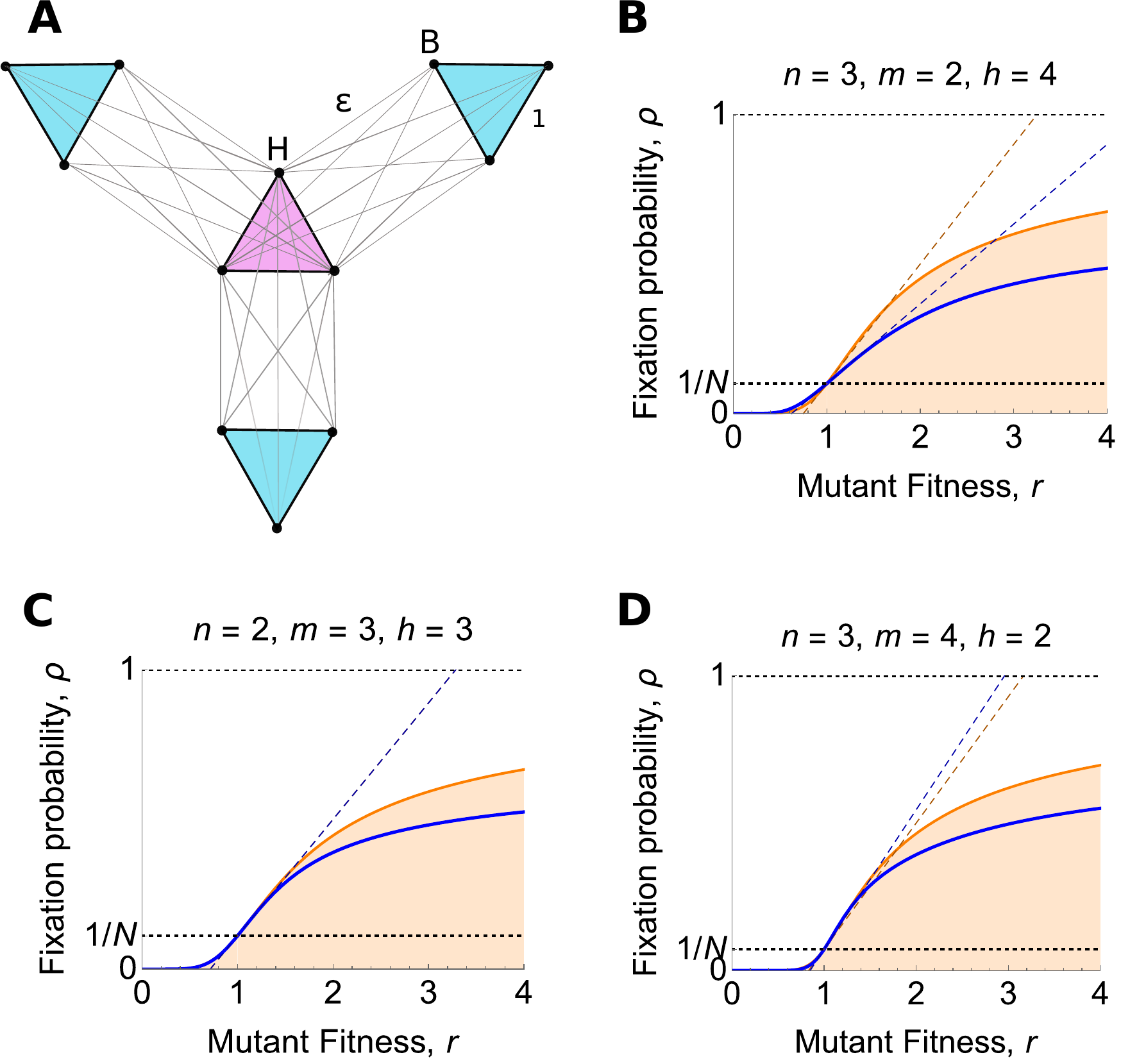}
        \caption{\textbf{Star of Islands} \textbf{(A)} The Star of Islands graph consists of a hub island of size $h \geq 2$, and $n \geq 1$ other islands of size $m \geq 2$. Edge weights are as shown. \textbf{(B)--(D)} Blue curves show fixation probability, Eqs.~\eqref{eq:SI_fixprob2}--\eqref{eq:SI_fixprob4}, plotted against mutant fitness $r$, in the $\epsilon \to 0$ limit. Blue dotted lines show the weak selection result, Eqs.~\eqref{eq:rhoexpand} and \eqref{eq:SI_Ne}. The orange curve and dotted line show the corresponding well-mixed population results, Eqs.~\eqref{eq:WM} and \eqref{eq:WMexpand}, for comparison. The Star of Islands graph is (B) a suppressor for $m\ge h-1$, (C) a reducer for $m=h$, and (D) a transient amplifier for $m \ge h+1$.}
        \label{fig:SI}
\end{figure}

 Our final example, the Star of Islands, is similar to Separated Hubs, except that the hubs are connected to each other. It consists of $h \geq 2$ hub vertices and $n \geq 2$ islands, with $m \geq 2$ vertices per island. The total population size is $N=nm+h$. Within the hub and each island, vertices are connected to one another with weight 1.  Each hub vertex is also connected to each island vertex with weight $\epsilon>0$.

For weak selection, in the $\epsilon \to 0$ limit, we calculate
\begin{equation}
\label{eq:SI_Ne}
\Ne= N + 
\frac{(m-h)mnh\big(h(h-1)+m(m-1)(n-2) \big)}
{\big(h(h-1)+m(m-1)\big)\big(h(h-1)+m(m-1)n\big)}.
\end{equation}
The second term on the right-hand side has the sign of $m-h$.  It follows that the Star of Islands is an amplifier of weak selection when $m>h$, and a suppressor of weak selection when $m<h$. 

We show in \ref{app:SIweak} that the strongest amplifier of weak selection occurs for $h=2$, $m=4$, and first $\epsilon \to 0$ and then $n \to \infty$. In this case $\Ne/N \to 9/7$. The strongest suppressor occurs for first $\epsilon \to 0$, then $n \to \infty$, and then $h \to \infty$, leading to $\Ne/N \to 0$.

For arbitrary $r>0$, in the  $\epsilon \to 0$ limit, we obtain $\rho_{SI_{n,m,h}(r)}  = \mathrm{num}/\mathrm{denom}$
with
\begin{multline}
\label{eq:SI_fixprob2}
    \mathrm{num}  = r^m(1-r^{-1})\left(1-r^{-(h+m)}\right)\\
    \left(hr^h\left(1-r^{-(h-1)}\right)\left(mn(m-1)r^m+h(h-1)\right)\right.\\
\left. + mr^m\left(1-r^{-(m-1)}\right)\left(mn(m-1)+h(h-1)r^h\right)\right),
\end{multline}
\begin{multline}
\mathrm{denom}  = (mn+h)\left(h\left(1-r^{-(h-1)}\right) \right.\\
\left.+mr^m\left(1-r^{-(m-1)}\right)\right)\left(mr^m\left(1-r^{-(m-1)}\right)(1-x^n)\right.\\
\left.+h\left(1-r^{-(h-1)}\right)\left(r^{h+m}-x^n\right)\right),
\end{multline}
and
\begin{equation}
\label{eq:SI_fixprob4}
x  = \frac{mr^{-m}(r^{m-1}-1)+h(r^{h-1}-1)}{mr^{h}(r^{m-1}-1)+h(r^{h-1}-1)}.
\end{equation}
In the limit of many islands, this simplifies to
\begin{equation}
\lim_{n \to \infty} \rho_{SI_{n,m,h}}(r) 
= \begin{cases}
0 & 0 \leq r \leq 1\\
\frac{(m-1)(1-r^{-1}) \left(1-r^{-(m-1)} \right)}{hr^{-m}\left(1-r^{-(h-1)}\right) + m\left(1-r^{-(m-1)}\right)}
& r>1.
\end{cases}
\end{equation}
We prove in \ref{app:SInonweak} that the Star of Islands is a reducer for $m=h$.

\subsection*{Approximating fixation probability}

We have defined the effective population size $\Ne$ in terms of the expected remeeting times of random walks.  While this definition allows $\Ne$---and, via Eq.~\eqref{eq:rhoexpand}, fixation probabilities under weak selection---to be computed in polynomial time, it gives little intuition for how $\Ne$ relates to more familiar graph statistics.  

To build such intuition, we use a mean-field approximation from Fotouhi et al.~\cite{fotouhi2019evolution}. We suppose that each remeeting time $\tau_i$ is approximately equal to a single value, $\tau$.  Then from Eq.~\eqref{eq:pi2tau} we have
\begin{align*}
1  = \sum_{i \in G} \pi_i^2 \tau_i
 \approx \tau \sum_{i \in G} \pi_i^2
 = \frac{\tau \sum_{i \in G} w_i^2}{(\sum_{i \in G} w_i)^2}
 = \frac{\tau \mu_2}{N \mu_1^2}.
\end{align*}
Above, $\mu_1= \frac{1}{N} \sum_{i \in G} w_i$ and $\mu_2= \frac{1}{N} \sum_{i \in G} w_i^2$ are the first and second moments, respectively, of the weighted degree distribution. Solving for $\tau$ and substituting in the definition of $\Ne$ gives the approximation
\begin{equation}
\label{eq:Neapprox}
\Ne \approx N \mu_1^2/\mu_2.
\end{equation}
Substituting in Eq.~\eqref{eq:rhoexpand} gives an approximation for fixation probability under weak selection in terms of $\mu_1$ and $\mu_2$.  Interestingly, the right-hand side of Eq.~\eqref{eq:Neapprox} was taken as the definition of effective population size by Antal et al.~\cite{antal2006evolutionary}, who studied the same model but arrived at this expression by different methods and assumptions.

\begin{figure}
    \centering
    \begin{tabular}{cc}
    \includegraphics[width=0.5\textwidth]{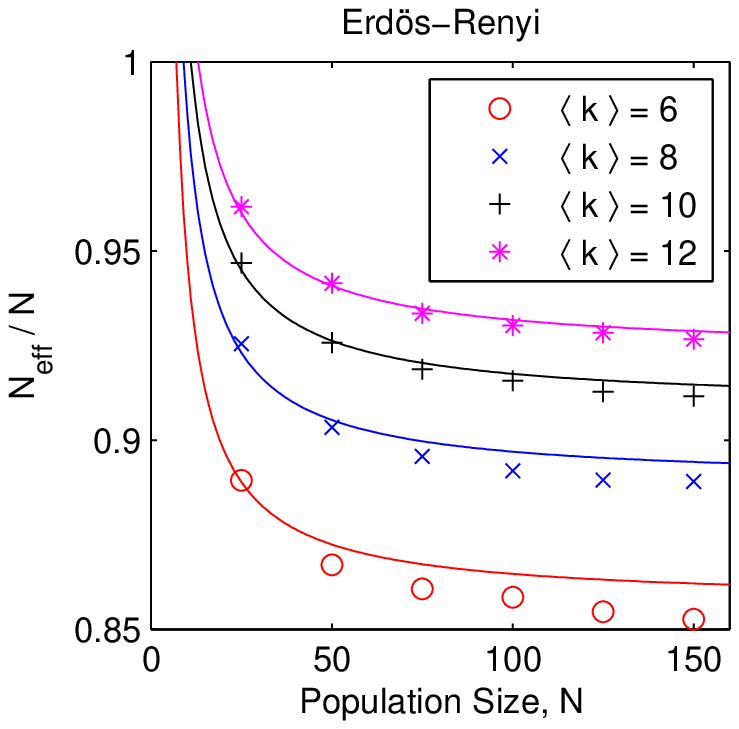} & \includegraphics[width=0.5\textwidth]{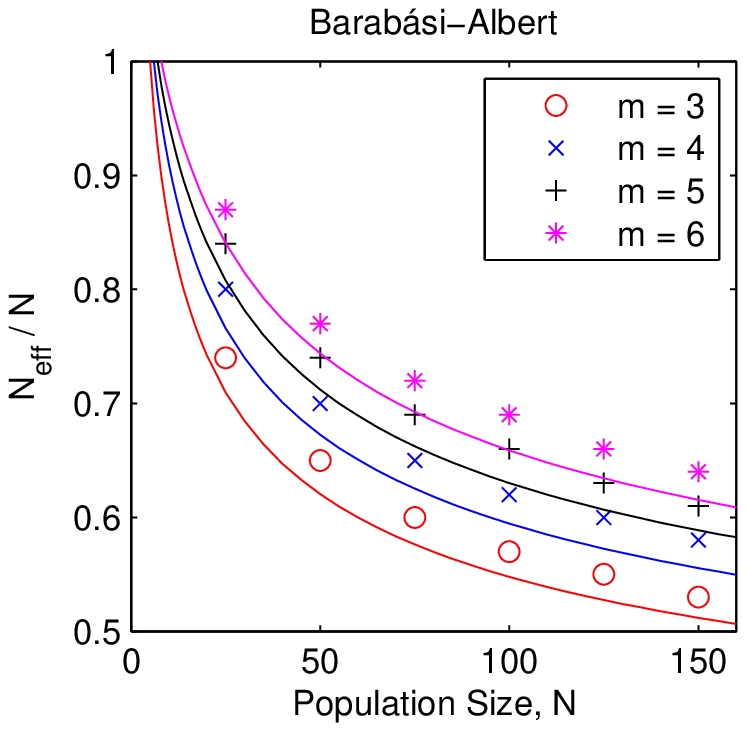}\\
    \textbf{\textsf{A}} & \textbf{\textsf{B}}
    \end{tabular}
    \caption{\textbf{Random graphs suppress weak selection.} Plot markers show the ratio $\Ne/N$, averaged over 1000 trials, plotted against population size $N$. Effective population size, $\Ne$, is calculated by numerically solving Eq.~\eqref{eq:trecur} for each graph and applying Eqs.~\eqref{eq:ti} and \eqref{eq:Ne}. All random graphs generated have $\Ne<N$ and are therefore suppressors of weak selection. Curves of the corresponding colors show the approximation $\Ne/N \approx \mu_1^2/\mu_2$ from Eq.~\eqref{eq:Neapprox}. Overall, we find that larger, sparser, and more heterogeneous graphs have smaller $\Ne/N$; these trends are all reflected in the approximation from Eq.~\eqref{eq:Neapprox}.
    \textbf{(A)} Erd\"{o}s-Renyi graphs were generated for specific values of the expected degree $\langle k \rangle$ by setting the link probability to $p=\langle k \rangle/(N-1)$. The moments $\mu_1$ and $\mu_2$ were approximated by assuming that the degree of each vertex is independently distributed as $\operatorname{Binom}(N-1,p)$.  This leads to $\Ne/N \approx (N-1)p/[(N-2)p+1]$. At the minimum population size of $N=\langle k \rangle + 1$, the graph is complete and therefore $\Ne/N=1$. \textbf{(B)} Barab\'{a}si-Albert preferential attachment networks \cite{barabasi1999emergence} were generated for linking numbers $3 \leq m \leq 6$, starting from a complete graph of size $m+2$.  The second moment was calculated using the expected degree distribution for finite Barab\'{a}si-Albert networks obtained by Fotouhi and Rabbat \cite{fotouhi2013network}. At the minimum population size of $N=m+2$, the graph is complete and therefore $\Ne/N=1$.}
    \label{fig:random}
\end{figure}

The approximation in Eq.~\eqref{eq:Neapprox} is reasonably accurate when compared to exact numerical calculation of $\Ne/N$ for Erd\"{o}s-Renyi and Barab\'{a}si-Albert graphs (Fig.~\ref{fig:random}).  In particular, the approximation explains the general trend that larger, sparser, and more heterogeneous graphs act as stronger suppressors (have smaller $\Ne/N$ ratio).  We note, however, that since $\mu_2 \leq \mu_1^2$ for any degree distribution,  the approximated $\Ne$ in Eq.~\eqref{eq:Neapprox} is at most equal to the actual population size $N$, with equality only for isothermal graphs. Therefore, the possibility of amplifiers of weak selection ($\Ne>N$) is not seen in this approximation.

\section*{Discussion}

\subsection*{Weak-selection methodology}

We have brought the method of weak selection, previously developed to analyze games on graphs \cite{Ohtsuki,Taylor,chen2013sharp,allen2014games,debarre2014social,durrett2014spatial,allen2017evolutionary,fotouhi2018conjoining}, to bear on the question of amplifiers and suppressors.  While our focus is on death-Birth updating, the method also applies to Birth-death updating, using a modified version of the coalescing random walk \cite{allen2017evolutionary,allen2018mathematical}. Our weak-selection method has the advantage of being computable in polynomial time (in the size of the graph), in contrast to other numerical methods \cite{hindersin2015most,hindersin2016exact,cuesta2017suppressors,cuesta2018evolutionary}, which take exponential time. Our expression for fixation probabilities in terms of coalescence times, Eq.~\eqref{eq:rhoexpand}, also enables the proof of general results such as the Weak-Selection Isothermal Theorem for dB.  A drawback of the weak-selection approach is that it does not distinguish between transient and non-transient amplifiers, nor can it detect complex behavior such as multiple switchings between amplification and suppression \cite{cuesta2018evolutionary}.

\subsection*{Effective population size}

Our analysis motivated a new definition of the effective population size of a graph, $\Ne=\sum_{i \in G} \pi_i \tau_i$. This notion of effective population is particular to dB updating, since it was derived from weak-selection fixation probabilities under this update rule. Our definition has a number of interesting connections to other definitions previously proposed for this concept \cite{antal2006evolutionary,broom2012two,allen2013adaptive,giaimo2018invasion}.  

First, as noted above, the effective population size of Antal et al.~\cite{antal2006evolutionary} appears in Eq.~\eqref{eq:Neapprox} as an approximation to ours. Whereas we obtain $\Ne \approx N \mu_1^2/\mu_2$ using coalescent theory and assuming uniformity of remeeting times, Antal et al.~\cite{antal2006evolutionary} obtain the same expression using diffusion approximation and assuming degree-uncorrelatedness of the graph. That the same expression arises from distinct analytical frameworks and assumptions hints at its naturality.

Second, our definition differs by a simple rescaling from the notion of ``fixation effective population size" proposed by Allen, Dieckmann, and Nowak (hereafter, ADN) \cite{allen2013adaptive}, and elaborated on by Giaimo et al.~\cite{giaimo2018invasion}:
\begin{equation}
\label{eq:AllenNe}
    \Ne^\mathrm{ADN} = \frac{N^2}{N-1} \frac{d \rho}{dr} \Big|_{r=1}.
\end{equation}
Comparing Eqs.~\eqref{eq:AllenNe} and \eqref{eq:rhoexpand}, we find the relationship
\begin{equation*}
    \Ne^\mathrm{ADN} = \frac{N(\Ne-2)}{2(N-1)}.
\end{equation*}
For large populations, $\Ne^\mathrm{ADN} \approx \Ne/2$.  The factor of two appears because the ADN definition uses the Wright-Fisher (discrete generations) model as a baseline, whereas the baseline for our $\Ne$ is the death-Birth process, for which generations are overlapping.  Such factors of two commonly appear in translating between discrete- and overlapping-generations models \cite{moran1962statistical,felsenstein1971inbreeding,allen2013adaptive}. 

Third, our proposed definition is closely related to the concept of ``inbreeding effective population size", which dates back to Wright \cite{wright1931evolution} and has been elaborated on by many others \cite{kimura1963measurement,felsenstein1971inbreeding,crow1988inbreeding,broom2012two}. The inbreeding effective population size is typically defined, for diploid populations, as the size of an idealized population that would have the same level of autozygosity (a locus containing two alleles that are identical by descent) \cite{kimura1963measurement,crow1988inbreeding}. Although autozygosity as such cannot occur in haploid populations, the remeeting time $\tau_i$ quantifies the closely-related concept of \emph{auto-coalescence}---the time for two hypothetical, independent lineages from $i$ to coalesce.  For rare mutation, coalescence time is proportional to the probability of non-identity by descent \cite{slatkin1991inbreeding}; thus auto-coalescence can be taken as a proxy for autozygosity in haploid populations. Our $\Ne$ is equal to the size of a well-mixed population that would experience the same degree of auto-coalescence, when averaged over individuals weighted by their reproductive values $\pi_i$.  It is therefore reasonable to interpret our $\Ne$ as a haploid analogue of the inbreeding effective population size.

\subsection*{Transient amplifiers of selection}

The most novel of our results is the discovery of the first transient amplifiers of selection for dB updating. Previous  investigations \cite{frean2008death,kaveh2015duality,hindersin2015most} had uncovered only suppressors and reducers.  Of the transient amplifiers we have found, the strongest is the 2-Fan, $F_{n,2}$, with many blades ($n \to \infty$; Fig.~\ref{fig:Fan}D). A companion work \cite{Pepa} proves that full (non-transient) amplifiers cannot exist for death-Birth updating.

Transient amplifiers appear to be quite rare for death-Birth updating. None were present within an ensemble of thousands of small graphs analyzed by Hindersin and Traulsen \cite{hindersin2015most}.  Similarly, no amplifiers of weak selection for dB were found in our ensembles of Erd\"{o}s-Renyi and Barabasi-Albert random graphs. 

Why should transient amplifiers be so rare? One possible clue comes from the  approximation for effective population size in Eq.~\eqref{eq:Neapprox}.  The approximated $\Ne$ is always less than or equal to the actual population size $N$, with equality only for isothermal graphs.  Thus any amplifier (transient or not) must be a graph for which the approximation in Eq.~\eqref{eq:Neapprox} is inaccurate.  Another possible clue is found by combining Eqs.~\eqref{eq:Ne} and \eqref{eq:pi2tau} to obtain
\begin{align*}
    \frac{N-\Ne}{N^2} = \left(\frac{1}{N} \sum_{i\in G}\pi_i^2\tau_i \right) - \left(\frac{1}{N} \sum_{i\in G}\pi_i \right) \left(\frac{1}{N} \sum_{i\in G}\pi_i\tau_i \right) .
\end{align*}
The right-hand side can be interpreted as the covariance of  $\pi_i$ with $\pi_i\tau_i$, as $i$ runs over vertices of $G$.  It follows that $G$ is an amplifier of weak selection if and only if $\pi_i$ and $\pi_i\tau_i$ are negatively correlated on $G$.  This requires a very strong negative relationship between weighted degree and remeeting time, which seems unlikely to arise in the usual random graph models.  A third clue comes from a companion work \cite{Pepa}, which proves a bound on the strength of transient amplifiers for dB.  Since transient amplifiers are limited in their possible strength, it is reasonable to suppose they are also limited in number. Each of these clues, however, falls very short of a formal proof.

\subsection*{Reducers of fixation}

Evolutionarily speaking, reducers of fixation maintain the status quo. They protect the resident type from replacement by any mutation, whether beneficial or deleterious.  Reducers may have applications in bio-engineering, in situations where it is desirable to inhibit the accumulation of all fitness-affecting mutations.  Indeed, it has been argued that the cycle-like structure of epithelial stem cells in mammals \cite{bozic2013unwanted,vermeulen2013defining} may have been evolutionarily designed to limit somatic mutations \cite{hindersin2016should}.  The cycle was the first known reducer \cite{hindersin2015most}; others were identified by Hindersin et al.~\cite{hindersin2016should}. To these examples we have added two more: the Separated Hubs graph with $n=h+1$ and the Star of Islands with $m=h$.

Isothermal graphs appear to be obvious candidates for reducers of fixation.  This is because, if $G$ is a reducer of fixation, then $\rho_G(r)$ and $\rho_{K_N}(r)$ must coincide to first order in $r$ at $r=1$, and this latter property holds for all isothermal graphs according to the Weak-Selection Isothermal Theorem for dB.  Indeed, all previously-known examples of reducers \cite{hindersin2015most,hindersin2016should} were isothermal. However, neither the Separated Hubs graph for $n=h+1$ nor the Star of Islands for $m=h$ are isothermal; thus reducers need not be isothermal. The converse question---whether all isothermal graphs are reducers---remains open.  To resolve this question, one would have to either discover or rule out other behaviors for isothermal graphs $G$, such as $\rho_G(r)>\rho_{K_N}(r)$ for all $r$ sufficiently close but not equal to 1.  Another open question is whether reducers of fixation exist for Bd updating.

\subsection*{Limitations}

Although we have uncovered an interesting range of behaviors for dB updating on graphs, there are limitations to our approaches.  All of our analytical results involve the limit of either weak selection or certain edge weights going to zero. Some of our results combine these limits, meaning that they apply only in rather extreme scenarios, and the results may depend on the limit ordering \cite{sample2017limits}.

We also do not consider the issue of fixation time \cite{frean2013effect,hindersin2014counterintuitive,askari2015analytical,hindersin2016exact,10.1371/journal.pcbi.1005864,hathcock2018fitness,moller2019exploring,tkadlec2019population}.  Previous work \cite{frean2013effect,hindersin2014counterintuitive,moller2019exploring,tkadlec2019population} has uncovered a tradeoff between fixation probability and time: Graphs that amplify selection tend to have larger fixation times than the complete graph, which impedes their ability to accelerate adaptation.  A number of our examples involve limits as certain edge weights go to zero.  Fixation times diverge to infinity for these examples; therefore they do not hasten the accumulation of beneficial mutations.  The search for graphs that (transiently) amplify selection without greatly increasing fixation times is left to future work.

\section*{Conclusion}

The identification of amplifiers and suppressors of selection has become a robust field of inquiry \cite{Lieberman2005Graphs,antal2006evolutionary,Broom2008Analysis,frean2008death,shakarian2012review,voorhees2013fixation,monk2014martingales,adlam2015amplifiers,cuesta2015fast,hindersin2015most,jamieson2015fixation,kaveh2015duality,pattni2015evolutionary,pavlogiannis2017amplification,cuesta2017suppressors,pavlogiannis2018construction,cuesta2018evolutionary,giaimo2018invasion,moller2019exploring}.  Most investigations of this question follow the lead of the initial work \cite{Lieberman2005Graphs} in focusing on Birth-death updating.  This is an interesting contrast to the study of games on graphs \cite{Ohtsuki,Taylor,chen2013sharp,debarre2014social,durrett2014spatial,pena2016evolutionary,allen2017evolutionary,fotouhi2018conjoining}, which typically considers death-Birth updating---likely because Birth-death updating tends not to support cooperative behaviors \cite{Ohtsuki}. 

Since the choice of update rule has such marked consequences, a full understanding of evolutionary dynamics in structured populations requires studying a variety of update rules.  Indeed, the update rule should properly be considered an aspect of the population structure, equal in importance to the graph itself \cite{debarre2014social,kaveh2015duality,pattni2015evolutionary,allen2018mathematical}.  If the theory of amplifiers and suppressors is to find application (for example, to microbial populations \cite{pavlogiannis2018construction}), it is critical to determine which update rules are plausible for specific organisms. Our work shows that dB updating exhibits at least some of the interesting phenomena that have been observed for Bd updating, and suggests there is more to be discovered.

\section*{Acknowledgments}
This work was supported by the National Science Foundation Award \#1715315.  We are grateful to Martin A.~Nowak, Josef Tkadlec, and Andreas Pavlogiannis for enlightening conversations.

\bibliographystyle{plos2015}
\bibliography{ConstantSel}

\appendix

\section{Model and notation}
\label{app:model}

\subsection{Graph}

Population structure is represented by a weighted graph $G$ with edge weights $w_{ij}, \; i,j \in G$. We require that $G$ be connected, undirected ($w_{ji} = w_{ij}$ for each $i,j \in G$), and have no self-loops ($w_{ii}=0$ for all $i$).

We define the \emph{weighted degree} of vertex $i$ as $w_i=\sum_{j \in G} w_{ij}$.  The total sum of all edge weights (counting both directions for each pair of vertices) is denoted $W$:
\[
W = \sum_{i,j \in G} w_{ij} = \sum_{i \in G} w_i.
\]
The \emph{relative weighted degree} of vertex $i$ is defined as $\pi_i = w_i/W$.

\subsection{Random walks}

Our analysis makes extensive use of random walks on graphs.  Steps are taken with probability proportional to edge weight. Thus the probability of a step from $i$ to $j$ is $p_{ij} = w_{ij}/w_i$.  The probability that an $n$-step walk from vertex $i$ terminates at vertex $j$ is denoted $p_{ij}^{(n)}$.  

Random walks on a weighted graph $G$ have a stationary probability distribution, in which the probability of being at vertex $i$ is equal to its relative weighted degree $\pi_i=w_i/\sum_j w_j$. This stationary distribution obeys the reversibility (detailed balance) property that for each $i,j \in G$, 
\begin{equation}
\label{eq:reverse}
\pi_i p_{ij}^{(n)} = \pi_j p_{ji}^{(n)}.
\end{equation}

\subsection{States and transitions}

There are two competing types: a resident (wild) type, labeled 0, and a mutant type, labeled 1. We indicate the type currently occupying vertex $i \in G$ by the variable $x_i \in \{0,1\}$.  The overall population state can be written as a binary vector $\vx = (x_i)_{i \in G} \in \{0,1\}^G$. 

Residents have fecundity (reproductive capacity) 1, while mutants have fecundity $r =1+\delta$.  Here, $\delta$ quantifies the reproductive advantage (or disadvantage, if negative) of mutants over residents.   The fecundity of vertex $i$ in a given state can be written compactly as $1 + \delta x_i$.  Neutral drift is represented by the case $r=1$, or equivalently, $\delta = 0$.

Under death-Birth updating, first an individual is selected, uniformly at random, to be replaced.  Then a neighbor is selected, with probability proportional to fecundity times edge weight, to reproduce into the vacancy.  Overall, the probability $e_{ij}(\vx)$ that vertex $j$ is replaced by the offspring of vertex $i$, in state $\vx$, can be written
\begin{equation}
\label{eq:dB}
e_{ij}(\vx) = \frac{1}{N}\frac{w_{ij} (1+\delta x_i)}{\sum_{k \in G} w_{kj} (1+\delta x_k)}.
\end{equation}
Offspring inherit the type of the parent.

\subsection{Fixation probability}

The process of resident-mutant competition is represented as a discrete-time Markov chain on $\{0,1\}^G$. This Markov chain has two absorbing states: the state $\mathbf{1}$ for which $x_i = 1$ for all $i \in G$, and the state $\mathbf{0}$ for which $x_i=0$ for all $i \in G$.  These states correspond to the fixation of mutants and residents, respectively.  All other states of the evolutionary Markov chain are transient \cite[Theorem 2]{allen2014measures}.  Thus from any given initial state, the evolutionary Markov chain will eventually become absorbed in either state $\mathbf{0}$ or state $\mathbf{1}$. 

As our initial condition, we suppose that a single mutant is placed on a vertex of $G$, chosen uniformly at random, and all vertices contain residents.  Unless otherwise stated, we will use this initial condition throughout.  We define the \emph{fixation probability} of mutants as the probability of absorption in state $\mathbf{1}$ from this initial condition.  The fixation probability of a mutant of fitness effect $r$ on graph $G$ is denoted $\rho_G(r)$; we will sometimes omit the $G$ for brevity.

\section{Method for weak selection}
\label{sec:weak}

Allen et al.~\cite{allen2017evolutionary} derived a weak-selection expansion for fixation probability in terms of coalescence times.  Here we provide an alternative derivation, based on results from Allen and McAvoy \cite{allen2018mathematical}.

\subsection{Change due to selection}

Each vertex has a \emph{reproductive value (RV)}, which quantifies its contribution, under neutral drift, to the future gene pool.  For death-Birth updating, the reproductive value of vertex $i$ is $N\pi_i$ \cite{maciejewski2014reproductive,Allen2015molecular,allen2017evolutionary,allen2018mathematical}.  The RV-weighted frequency of mutants in a given state $\vx$ is $\hat{x}=N\sum_{i\in G} \pi_i x_i$.  Our method for weak selection centers on the expected change, $\Delta(\vx)$, in $\hat{x}$ from a given state $\vx$.  Note that if vertex $j$ is replaced by the offspring of vertex $i$, the resulting change in $\hat{x}$ is $N\pi_j(x_i-x_j)$. Taking the expectation over all such events, with probabilities given by Eq.~\eqref{eq:dB}, we obtain 
\begin{align}
\nonumber
\Delta(\vx) & = N\sum_{i,j \in G} e_{ij}(\vx) \pi_j (x_i-x_j)\\
& = \sum_{j \in G} \pi_j\left( -x_j + \sum_{i \in G} x_i \frac{w_{ij} (1+\delta x_i)}{\sum_{k \in G} w_{kj} (1 + \delta x_k)}\right)\\
\label{eq:DeltadB}
& = \sum_{i \in G} x_i\left( -\pi_i + \sum_{j \in G} \pi_j \frac{w_{ij} (1+\delta x_i)}{\sum_{k \in G} w_{kj} (1 + \delta x_k)}\right).
\end{align}

To analyze weak selection, we form a first-order Taylor expansion of $\Delta(\vx)$ in $\delta$ around $\delta = 0$.  That is, we seek an expansion of the form
\begin{equation}
\label{eq:Deltaexpand}
\Delta(\vx) = \Delta^\circ(\vx) + \delta \Delta'(\vx) + \mathcal{O}(\delta^2).
\end{equation}
Above and throughout, we use a superscript $^\circ$ to indicate a quantity evaluated at neutral drift ($\delta=0$), and a prime $'$ to indicate a derivative with respect to $\delta$ at $\delta=0$.  Expanding Eq.~\eqref{eq:DeltadB} and making use of Eqs.~\eqref{eq:reverse}--\eqref{eq:hndef}, we obtain
\begin{align*}
\Delta(\vx)
& = \sum_{i \in G} x_i \left( - \pi_i + \sum_{j \in G}  \pi_j p_{ji}
\left( 1 + \delta \left( x_i - \sum_{k \in G} p_{jk} x_k \right) \right) \right)
+ \mathcal{O}(\delta^2)\\
& =  \sum_{i \in G} x_i \left( - \pi_i + \sum_{j \in G}  \pi_i p_{ij}
\left( 1 + \delta \left( x_i - \sum_{k \in G} p_{jk} x_k \right) \right) \right)
+ \mathcal{O}(\delta^2)\\
& = \delta \sum_{i \in G}  x_i \left( \pi_i x_i  - \pi_i\sum_{k \in G} p_{ik}^{(2)} x_k \right) + \mathcal{O}(\delta^2)\\
& =  \delta  \sum_{i \in G} \pi_i  x_i \left( x_i - x_i^{(2)} \right) + \mathcal{O}(\delta^2).
\end{align*}
Above, we have introduced the notation $x_i^{(2)}=\sum_{j \in G} p_{ij}^{(2)}x_j$. We conclude that $\Delta^\circ(\vx)=0$ for all states $\vx$, and that the first-order coefficient in Eq.~\eqref{eq:Deltaexpand} is given by 
\begin{equation}
\label{eq:D's}
\Delta'(\vx) = \sum_{i \in G} \pi_i x_i \left(x_i - x_i^{(2)} \right).
\end{equation}
The appearance of $x_i^{(2)}$ reflects the fact that, when a vertex is selected for replacement, the neighbors competing to fill the vacancy are two steps from each other.  Mutants co-occurring at distance two therefore affect each other's reproductive success.

\subsection{Fixation probability under weak selection}
\label{sec:weakfix}

We now turn to fixation probability.  The neutral ($\delta = 0$) fixation probability for dB updating is  $\rho^\circ = \frac{1}{N}$ \cite{Allen2015molecular}. We therefore seek a weak-selection expansion of the form
\begin{equation}
\label{eq:rhoexpandapp}
\rho(1+\delta) = \frac{1}{N} + \delta \rho' + \mathcal{O}(\delta^2).
\end{equation}
To obtain an expression for $\rho'$, we introduce a small rate of mutation $u>0$ (which we will later take to zero). With each reproduction, the offspring inherits the type of the parent with probability $1-u$; otherwise, with probability $u$, the offspring is assigned either type 0 or 1 with equal probability. With mutation, the evolutionary Markov chain becomes ergodic \cite[Theorem 1]{allen2014measures}, with a unique stationary probability distribution which we call the \emph{Mutation-Selection Stationary (MSS) distribution}.  We denote expectations in this distribution by $\E_\MSS$.

Allen and McAvoy \cite{allen2018mathematical} obtained a relationship between the expectation of $\Delta(\vx)$ under the MSS distribution, and the fixation probabilities of two competing types (with each invading the other). In the case of mutant-resident competition in the death-Birth process, Eqs.~(57)-(59) of Allen and McAvoy \cite{allen2018mathematical} yield 
\begin{equation}
\label{eq:MSSrho}
\lim_{u \to 0} \frac{\E_\MSS[\Delta]}{u}
= \frac{\rho(r)-\rho(r^{-1})}{2 \left(\rho(r) + \rho(r^{-1}) \right)}.
\end{equation}
We note that $\rho(r^{-1})$ is the fixation probability of the resident type into a mutant-dominated population, with the initial resident placed uniformly at random.  This is because a resident of fitness 1 invading a mutant population of fitness $r$ is equivalent (upon rescaling by $r$) to a mutant of fitness $r^{-1}$ invading a resident population of fitness 1.

For neutral drift ($r=1$), we have $\Delta^\circ(\vx)=0$ for all states $\vx$, and $\rho^\circ=\frac{1}{N}$ \cite{Allen2015molecular}. Taking the derivative of both sides of Eq.~\eqref{eq:MSSrho} at $r=1$ yields
\begin{equation}
\label{eq:MSSrhoweak}
\lim_{u \to 0} \frac{\E^\circ_\MSS[\Delta']}{u}
= \frac{\rho'}{2 \rho^\circ} = \frac{N\rho'}{2}.
\end{equation}
The notation $\E_\MSS^\circ$ means that the expectation is taken over the \emph{neutral} ($\delta=0$) MSS distribution. Combining Eqs.~\eqref{eq:D's} and \eqref{eq:MSSrhoweak}, we have
\begin{equation}
\label{eq:rhoxixj}
\rho' = \frac{2}{N} \lim_{u \to 0} \frac{\E^\circ_\MSS[\Delta']}{u}
=  \frac{2}{N} \sum_{i \in G} \pi_i \lim_{u \to 0} \left( \frac{\E^\circ_\MSS \left [x_i \left(x_i - x_i^{(2)}\right) \right]}{u}  \right).
\end{equation}

\subsection{Reduction to coalescence times}

Eq.~\eqref{eq:rhoxixj} expresses $\rho'$ in terms of a particular statistic of spatial assortment under the MSS distribution.  Such statistics can be calculated in terms of the coalescence times $\tau_{ij}$, which are the unique solution to
\begin{equation}
\label{eq:trecurapp}
\tau_{ij} = \begin{cases} 0 & i=j\\
1 + \frac{1}{2}  \sum_{k \in G} \left(p_{ik} \tau_{jk} + p_{jk} \tau_{ik} \right)& i \neq j.
\end{cases}
\end{equation}
Intuitively, the larger the coalescence time $\tau_{ij}$, the more that the occupants of $i$ and $j$ are separated from their common ancestor, and the less likely the are to have the same type. Specifically, Eq.~(111) of Allen and McAvoy \cite{allen2018mathematical} gives the relationship
\begin{equation}
\label{eq:MSStau}
\lim_{u \to 0} \frac{\E_\MSS^\circ[x_i(x_i - x_j)] }{u} = \frac{\tau_{ij}}{4}.
\end{equation}
Combining Eqs.~\eqref{eq:MSSrhoweak}, \eqref{eq:MSStau}, and \eqref{eq:tndef}, we obtain
\begin{equation}
\label{eq:rho't2}
\rho' = \frac{\tau^{(2)}}{2N},
\end{equation}
where $\tau^{(n)}$  is the expected coalescence time from the two ends of an $n$-step stationary random walk:
\begin{equation}
\label{eq:tndef}
\tau^{(n)} = \sum_{i.j \in G} \pi_ip_{ij}^{(n)}\tau_{ij}
\end{equation}
 We observe that $\tau^{(2)}$ characterizes the frequency with which mutants co-occur (and therefore compete with each other) at distance two.

Allen et al.~\cite{allen2017evolutionary} derived the following recurrence relation for $\tau^{(n)}$:
\begin{equation}
\label{eq:tnrecur}
\tau^{(n+1)} = \tau^{(n)} + \sum_{i \in G}\pi_i p_{ii}^{(n)} \tau_{i} - 1.
\end{equation}
Above, $\tau_i$ is the remeeting time from vertex $i$:
\begin{equation}
\label{eq:tidef}
\tau_i = 1 + \sum_{j \in G} p_{ij} \tau_{ij}.
\end{equation}
 Noting that $p_{ii}^{(0)}=1$ and $p_{ii}^{(1)}=0$ (since $G$ has no self-loops), we have
\begin{align}
\label{eq:t0}
\tau^{(0)} & = 0\\
\label{eq:t1}
\tau^{(1)} & = \sum_{i \in G}\pi_i \tau_{i} - 1\\
\label{eq:t2}
\tau^{(2)} & = \sum_{i \in G}\pi_i \tau_{i} - 2.
\end{align}
As in the main text, we define effective population size as 
\begin{equation}
\label{eq:Nedef}
\Ne = \sum_{i \in G} \pi_i \tau_i.
\end{equation} 
Combining Eqs.~\eqref{eq:rhoexpandapp}, \eqref{eq:rho't2}, \eqref{eq:t2}, and \eqref{eq:Nedef} we obtain the expansion
\begin{equation}
\label{eq:rhoNe}
\rho(1+\delta) = \frac{1}{N} + \delta \frac{\Ne-2}{2N} + \mathcal{O}(\delta^2),
\end{equation}
which is Eq.~(8) of the main text.

Thus a weak-selection expansion of fixation probabilities can be obtained for any graph by first computing pairwise coalescence times, and using these to compute $\Ne$. This method can be performed in $\mathcal{O}(N^6)$ time using standard methods such as Gaussian elimination.

\section{Mathematical lemmas}
\label{app:lemmas}

Here we prove mathematical lemmas that will be used to characterize the behavior of our example families in graphs in Appendix \ref{sec:examples}.  In analyzing these examples, we repeatedly encounter the function
\begin{equation}
F_r(x) = \ln{\left|\frac{1-r^{-x}}{x}\right|},
\end{equation}
defined for $x>0, r>0, r \neq 1$. Before proceeding, we establish some results about this function and about convex functions in general.

\begin{lemma}\label{lemma_F}
$F_{r}(x)$ is increasing in $x$ for $0<r<1$, decreasing in $x$ for $r>1$, and convex in $x$ for all $r>0, r \neq 1$.
\end{lemma}

\begin{proof}
To prove the monotonicity of $F_{r}(x)$, we examine the first derivative:
\begin{equation*}
 F_{r}'(x)=\frac{\ln{\left(r^x\right)}}{x\left(r^{x}-1\right)}-\frac{1}{x}.
\end{equation*}

We note that for all $y>0, y\ne 1$,
\begin{equation}
\label{eq:logineq}
\ln{(y)}<y-1.
\end{equation}
In particular, 
\begin{equation}
\label{eq:rxineq}
\ln{(r^{x})}<r^{x}-1.
\end{equation}
In the case $r>1$ (recalling that $x>0$) we have $x(r^{x}-1)>0$.  Dividing inequality~\eqref{eq:rxineq} by $x(r^{x}-1)$ we obtain
\begin{align*}
\frac{\ln{\left(r^x\right)}}{x\left(r^{x}-1\right)}&>\frac{1}{x},
\end{align*}
which shows that $F'_r(x) > 0$.  In the case $0<r<1$ we have $x(r^{x}-1)<0$ and therefore 
\begin{align*}
\frac{\ln{\left(r^x\right)}}{x\left(r^{x}-1\right)}&<\frac{1}{x},
\end{align*}
which shows that $F'_r(x) < 0$.  This proves that $F_r(x)$ is increasing in $x$ for $r>1$ and decreasing for $0 < r < 1$.

To prove the convexity of $F_{r}(x)$, we examine the second derivative,
\begin{equation}
\label{eq:F''}
 F_{r}''(x)=\frac{\left(r^{x}-1\right)^2-r^{x}\ln^{2}{\left(r^x\right)}}{x^2(r^x-1)^2}.
\end{equation}
The denominator is positive for all valid $r$ and $x$, so we must show the numerator is positive as well. We rearrange inequality Eq.~\eqref{eq:logineq} to
\[
2y>2(1+\ln y).
\]
For $r>1$, integrating both sides from $y=1$ to $y=r^{x/2}$ gives 
\[
r^x-1 > r^{x/2} \ln{(r^x)}
\]
Squaring both sides gives 
\begin{equation}
(r^{x}-1)^{2}>r^{x}\ln^{2}{(r^{x})}.
\end{equation}
A similar argument obtains the same result for the case $0<r<1$. Thus the numerator of $F_{r}''(x)$ in Eq.~\eqref{eq:F''} is positive. This proves that $F_r(x)$ is strictly convex in $x$ for all $r>0, r\neq 1$.
\end{proof}

\begin{lemma}
\label{lem:convex}
Let $f(x)$ be a twice-differentiable, strictly convex function defined on some interval $[a,b]$.  Then for any $d$ such that $0<d<(b-a)/2$,
\begin{equation}
f(a+d)+f(b-d) < f(a) + f(b).
\end{equation}
\end{lemma}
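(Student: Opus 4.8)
The plan is to recognize the two inner points $a+d$ and $b-d$ as a complementary pair of convex combinations of the endpoints $a$ and $b$, and then invoke the defining inequality of strict convexity directly. Writing $t = d/(b-a)$, the constraint $0 < d < (b-a)/2$ is exactly $0 < t < 1/2$, and one checks that $a+d = (1-t)a + tb$ while $b-d = ta + (1-t)b$. Since $0 < t < 1$, both points lie strictly in the open interval $(a,b)$, which is precisely the situation in which strict convexity yields a \emph{strict} inequality.

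First I would apply strict convexity to each point separately: $f(a+d) < (1-t) f(a) + t f(b)$ and $f(b-d) < t f(a) + (1-t) f(b)$. Adding these two inequalities, the coefficients of $f(a)$ sum to $(1-t)+t = 1$ and likewise for $f(b)$, so the right-hand side collapses to $f(a) + f(b)$, giving the claim. The two strict inequalities are legitimate because $t$ and $1-t$ both lie strictly between $0$ and $1$, so neither interpolation point coincides with an endpoint.

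An alternative, equally short route exploits the hypothesized differentiability. Set $g(d) = f(a+d) + f(b-d)$, so that $g(0) = f(a) + f(b)$, and compute $g'(d) = f'(a+d) - f'(b-d)$. For $0 < d < (b-a)/2$ we have $a+d < b-d$, and since strict convexity forces $f'$ to be strictly increasing, $g'(d) < 0$ throughout the open interval; integrating from $0$ then gives $g(d) < g(0)$. Either argument is routine, so there is no genuine obstacle here; the only point demanding care is the strictness of the inequality, which is exactly why the hypothesis $d < (b-a)/2$---keeping both points off the endpoints, equivalently preventing $a+d$ and $b-d$ from meeting or crossing---is precisely what is needed.
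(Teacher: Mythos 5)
Your proof is correct, and your primary argument takes a genuinely different route from the paper's. The paper proves the lemma by applying the Mean Value Theorem on each of $[a,a+d]$ and $[b-d,b]$ to produce points $c_1 \in (a,a+d)$ and $c_2 \in (b-d,b)$ with $f'(c_1) = \bigl(f(a+d)-f(a)\bigr)/d$ and $f'(c_2) = \bigl(f(b)-f(b-d)\bigr)/d$, notes $c_1 < c_2$ from $d < (b-a)/2$, and invokes the strict monotonicity of $f'$ to get $f(b)-f(b-d) > f(a+d)-f(a)$. Your second argument (integrating $g'(s) = f'(a+s)-f'(b-s) < 0$) is essentially this same mechanism in continuous form. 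Your first argument, however---writing $a+d = (1-t)a+tb$ and $b-d = ta+(1-t)b$ with $t = d/(b-a)$ and summing the two strict convexity inequalities---is more elementary and more general: it uses only the definition of strict convexity, never differentiability, and in fact proves the conclusion for every $0 < d < b-a$, not just $d < (b-a)/2$. That observation exposes a small inaccuracy in your closing remark: the hypothesis $d < (b-a)/2$ is \emph{not} ``precisely what is needed'' for strictness in the convex-combination argument (strict interiority, i.e.\ $0 < d < b-a$, suffices, and even $d = (b-a)/2$ yields a strict inequality since both points then equal the midpoint); it is only the derivative-based arguments, yours and the paper's alike, that use $a+d < b-d$ in an essential way. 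This does not affect correctness, since the stated hypothesis implies everything you use.
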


\begin{proof}
By the Mean Value Theorem, there exists $c_1 \in (a,a+d)$ such that
\begin{equation}
\label{eq:MVT1}
f'(c_{1})= \frac{f(a+d)-f(a)}{d}.
\end{equation}
Similarly, there exists $c_2 \in (b-d,b)$ such that
\begin{equation}
\label{eq:MVT2}
f'(c_{2})= \frac{f(b)-f(b-d)}{d}.
\end{equation}
Because $d<(b-a)/2$, $c_{2}>c_{1}$. Since $f$ is strictly convex, its derivative is strictly increasing. Thus, 
$f'(c_{2})>f'(c_{1})$.  Combining with Eqs. \eqref{eq:MVT1} and \eqref{eq:MVT2} yields
\[
f(b)-f(b-d)>f(a+d)-f(a).
\]
Rearranging yields the desired result.
\end{proof}

\section{Examples}
\label{sec:examples}

We consider three examples: the $m$-Fan, Star of Islands, and Star of Islands with Separated Hubs.  These graphs act conditionally as reducers, suppressors, or transient amplifiers. 

\subsection{Separated Hubs}

The Separated Hubs graph has $h\geq 1$ ``hub'' vertices, $n\geq 2$ ``blades'', and $m \geq 2$ vertices per blade. Within each blade, vertices are connected to one another with weight 1. Each hub vertex is also connected to each blade vertex with weight $\epsilon$. The Fan graph is the $h=1$ case of Separated Hubs.

\subsubsection{Weak Selection}

Let $H$ represent a hub vertex and $B$ represent any blade vertex. The weighted degree and relative weighted degrees of these vertices are:
\begin{align*}
      w_H&=nm\epsilon\\
w_B&=m-1+h\epsilon\\
\pi_H&=\frac{w_{H}}{h w_{h}+n m w_{B}}=\frac{\epsilon }{2 h \epsilon +m-1}\\
\pi_B&=\frac{w_{B}}{h w_{h}+n m w_{B}}=\frac{h \epsilon +m-1}{n m(2 h \epsilon +m-1)}.
  \end{align*}
Let $B$ and $B'$ represent any two vertices on the same blade and $H$ and $H'$ represent any two hub vertices. The step probabilities are:
\begin{align}
    p_{BB'}&=\frac{1}{w_{B}}=\frac{1}{m-1+h\epsilon}\\
    p_{BH}&=\frac{\epsilon}{w_{B}}=\frac{\epsilon}{m-1+h\epsilon}\\
    p_{HB}&=\frac{\epsilon}{w_{H}}=\frac{1}{nm}.
\end{align}
All other step probabilities (e.g.~between vertices on different blades) are zero.

There are four coalescence times to consider: $\tau_{HH'}$ for two different hub vertices, $\tau_{HB}$ for a hub and a blade, $\tau_{BB'}$ for two different vertices on a common blade, and $\tau_{BB''}$ for two vertices on different blades.  Eq.~\eqref{eq:trecurapp} becomes
\begin{align}
    \tau_{HH'}&=1+\frac{1}{2}\big(2nm p_{HB} \big)\\
    \tau_{BH}&=1+\frac{1}{2}\big((h-1) p_{BH}\tau_{HH'}+(m-1) p_{BH}+(m-1) p_{HB} \tau_{BB'}+(n-1)m p_{HB} \tau_{BB'} \big)\\
    \tau_{BB'}&=1+\frac{1}{2}\big( 2(m-2) p_{BB'} \tau_{BB'}+2h p_{BH} \tau_{BH}\big)\\
    \tau_{BB''}&=1+\frac{1}{2}\big(2(m-1) p_{BB''}+2h p_{BH} \tau_{BH}\big).
\end{align}

Solving this system and substituting into Eqs.~\eqref{eq:tidef} and \eqref{eq:Nedef}, we obtain $\Ne=\text{num/denom}$, where 
\begin{multline}
\label{eq:SHNenum}
    \mathrm{num}=m^{4}n+h\epsilon\left(h\epsilon-1\right)+m^{3}\left(n(6h\epsilon+\epsilon-1)-1\right)\\
    +m^{2}\left(2-2h\epsilon+n\left(-1-(1+4h)\epsilon+h(1+12h)\epsilon^2\right)\right)\\
    +m\left(-1+3h\epsilon-h^2\epsilon^2+n\left(1-4h^2\epsilon^2+8h^3\epsilon^3-h\epsilon(2+\epsilon))\right)\right)
\end{multline}
\begin{equation}
\label{eq:SHNedenom}
 \mathrm{denom}=\left((-1+m+2h\epsilon)\left(m^2+h\epsilon(-1+h\epsilon)+m(-1+n\epsilon+h\epsilon(2+n\epsilon))\right)\right).
\end{equation}
For $\epsilon \to 0$, the effective population size becomes
\begin{equation}
\label{eq:SHNe}
        \lim_{\epsilon \to 0}\Ne = nm + n - 1.
\end{equation}

Since the population size is $N=nm+h$, we find that the Separated Hubs graph is an amplifier of weak selection for $n>h+1$ and a suppressor of weak selection for $n<h+1$.  We show below that these results extend to nonweak selection as well.

From Eq.~\eqref{eq:rhoNe}, we have the first-order term of the weak-selection fixation probability:
\begin{equation}
    \lim_{\epsilon\rightarrow 0}\rho'=\frac{n m+n-3}{2 (h+ nm)}.
\end{equation}
Since the right-hand side is decreasing in $h$, the best amplifier of weak selection occurs when $h=1$, which is the case of the Fan (section \ref{sect:fan}).

\subsubsection{Nonweak Selection}
\label{sec:SHnonweak}

We now calculate fixation probabilities, for arbitrary mutant fitness $r$, on the Separated Hubs graph in the $\epsilon \to 0$ limit. This limit leads to a separation of timescales: a fast timescale for events whose probability is $\mathcal{O}(1)$ as $\epsilon \to 0$, and a slow timescale for events with probability $\mathcal{O}(\epsilon)$. Fixation on each blade vertex occurs on the fast timescale.  Replacement of hub vertices by individuals on the blades also occurs on the fast timescale.  Changes in the number of blade vertices that are fixed for mutants occurs on the slow timescale.

Since replacement of hub vertices occurs on the fast timescale, we assume that the types of the hubs converge in time-average to their stationary probability distribution.  To determine this distribution, we note that, when a hub is chosen for replacement, the probability to be replaced by a mutant is proportional to $rkm$, while the probability to be replaced by a resident is proportional to $n-k$, where $k$ denotes the number of blades that are fixed for mutants. Therefore, for each hub vertex, the stationary probabilities are given by
\begin{align*}
    \Prob[\text{Hub is M}] & = \frac{rk}{rk + n-k}\\
    \Prob[\text{Hub is R}] & = \frac{n-k}{rk + n-k}.
\end{align*}
The types of different hub vertices are independent in the stationary distribution.

Dynamics on the slow timescale can be represented as a continuous-time Markov chain. We identify the states according to the number $k \in \{0, \ldots, n\}$ of blades that are fixed for mutants (with the remaining $n-k$ fixed for residents). Let us derive the transition rate from state $k$ to state $k+1$.  First, a resident blade vertex must be chosen for replacement, which happens with probability 
\begin{equation}
\label{eq:probBRdies}
\Prob[\text{Blade R dies}] = \frac{m(n-k)}{nm+h}.
\end{equation}
Then a mutant from the hub must be chosen to replace this resident blade vertex. The probability of this is
\[
\Prob[\text{Hub M replaces}] = \E \left[ \frac{\epsilon r X}{\epsilon(rX + h-X) + m-1} \right] .
\]
Above, $X$ is a random variable representing the number of mutant hubs in this state, with distribution
\[
X \sim \operatorname{Binom} \left(h, \frac{rk}{rk + n-k} \right).
\]
As $\epsilon \to 0$, the probability of a hub mutant replacing the vacancy becomes 
\begin{equation}
\label{eq:probHMreplaces}
\Prob[\text{Hub M replaces}] =  \frac{\epsilon r^2 h k}
{(rk+n-k)(m-1)} + \mathcal{O}(\epsilon^2).
\end{equation}
Finally, the mutant type must become fixed on the blade.  Since each blade is a complete graph of size $m$, this happens with probability
\begin{equation}
\label{eq:probBMfixes}
\Prob[\text{M fixes on blade}] =  \frac{m-1}
{m} \frac{1-r^{-1}}{1-r^{-(m-1)}}.
\end{equation}
Combining Eqs.~\eqref{eq:probBRdies}, \eqref{eq:probHMreplaces}, and \eqref{eq:probBMfixes} and neglecting terms of order $\epsilon^2$, the transition rate from state $k$ to $k+1$ for the slow timescale is
\begin{equation}
Q_{k, k+1} = \left(\frac{m(n-k)}{nm+h} \right) 
\left(\frac{\epsilon r^2 h k}{(rk+n-k)(m-1)}\right)
\left(\frac{m-1}
{m} \frac{1-r^{-1}}{1-r^{-(m-1)}} \right).
\end{equation}
A similar argument yields the transition rate from state $k$ to $k-1$:
\begin{equation}
Q_{k, k-1}=\left(\frac{mk}{nm+h} \right) 
\left(\frac{\epsilon h (n-k)}{(rk+n-k)r(m-1)}\right)
\left(\frac{m-1}
{m} \frac{1-r}{1-r^{m-1}} \right).
\end{equation}
We observe that
\begin{equation}
    \frac{Q_{k, k-1}}{Q_{k, k+1}}=r^{-(m+1)},
\end{equation}
independently of $k$.

The probability of fixation starting with one blade is:
\begin{align*}
\lim_{t \to \infty} Q^{(t)}_{1, n}
    &=\frac{1}{1+\sum_{k=1}^{n-1}\prod_{j=1}^{k}\frac{Q_{k, k-1}}{Q_{k, k+1}}}\\
    &=\frac{1}{1+\sum_{k=1}^{n-1} r^{-k(m+1)}}\\
    & =\frac{1-r^{-(m+1)}}{1-r^{-n(m+1)}}.
\end{align*}
In the $\epsilon \to 0$ limit, fixation becomes impossible for mutants originating on a hub vertex, since these vertices are much more likely to be replaced than to be chosen for reproduction.  Thus the overall fixation probability is equal to the probability that a mutant is placed on a blade, multiplied by the probability that it fixates on the blade, multiplied by the probability of fixation from one blade:
 \begin{align}
 \nonumber
  \rho_{SH_{n,m,h}}(r) & =\left(\frac{nm}{nm+h}\right) \left(\frac{m-1}{m}\frac{1-r^{-1}}{1-r^{-(m-1)}}\right)
  \left(\frac{1-r^{-(m+1)}}{1-r^{-n(m+1)}}\right)\\
  \label{eq:SH_fixprobapp}
  & =\left(\frac{n(m-1)}{nm+h}\right) \left(\frac{1-r^{-1}}{1-r^{-(m-1)}}\right)
  \left(\frac{1-r^{-(m+1)}}{1-r^{-n(m+1)}}\right).
\end{align}

In the limit of many blades, we obtain
\begin{equation}
\lim_{n \to \infty} \rho_{SH_{n,m,h}}(r) =
\begin{cases}
0 & 0\leq r\leq 1\\
\frac{(m-1)(1-r^{-1})(1-r^{-(m+1)})}{m(1-r^{-(m-1)})}
 &r>1.
\end{cases}
\end{equation}
Interestingly, this limit is independent of the number $h$ of hubs.

We are now prepared to prove a complete classification of the behavior of the Separated Hubs graph in the $\epsilon \to 0$ limit:

\begin{theorem}
\label{thm:SHclassify}
The Separated Hubs graph $SH_{m,n,h}$ is, in the $\epsilon \to 0$ limit,
\begin{itemize}
\item a suppressor for $n \le h$, 
\item a reducer for $n=h+1$, 
\item a transient amplifier for $n \ge h+2$.
\end{itemize}
\end{theorem}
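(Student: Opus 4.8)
The plan is to reduce everything to the sign of the logarithmic ratio $g(r) := \ln\bigl(\rho_{SH_{n,m,h}}(r)/\rho_{K_N}(r)\bigr)$, since this sign encodes the classification directly: $G$ is a suppressor iff $g>0$ on $(0,1)$ and $g<0$ on $(1,\infty)$, a reducer iff $g<0$ throughout, and a transient amplifier iff $g<0$ on $(0,1)$, $g>0$ on $(1,r^*)$, and $g<0$ on $(r^*,\infty)$ for a single $r^*>1$. Dividing \eqref{eq:SH_fixprobapp} by \eqref{eq:WM} (with $N=nm+h$), the factors $1-r^{-1}$ and the algebraic prefactors cancel, and rewriting each remaining factor $1-r^{-x}$ as $\pm x\,e^{F_r(x)}$ via the definition of $F_r$ (the signs cancel, as there are two such factors above and two below) collapses the ratio to the clean four-term form
\begin{equation}
g(r) = F_r(m+1) + F_r(N-1) - F_r(m-1) - F_r\bigl(n(m+1)\bigr),
\end{equation}
with $N-1=nm+h-1$ and $n(m+1)=nm+n$. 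The whole theorem then becomes a question about the sign of a signed sum of the convex function $F_r$ (Lemma \ref{lemma_F}) evaluated at four points.

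For the reducer case $n=h+1$ I would invoke Lemma \ref{lem:convex} directly. Here $N-1=nm+n-2$, so the four arguments in increasing order are $m-1 < m+1 \le nm+n-2 < nm+n$, the two inner points carry the $+$ signs, the two outer points carry the $-$ signs, and the displacements are equal, $(m+1)-(m-1)=2=n(m+1)-(N-1)$; equivalently $(m+1)+(N-1)=(m-1)+n(m+1)$. Taking $a=m-1$, $b=n(m+1)$, $d=2$ in Lemma \ref{lem:convex}—valid since $(b-a)/2>2$ for $m,n\ge 2$—and using strict convexity of $F_r$ from Lemma \ref{lemma_F} yields $F_r(m+1)+F_r(N-1)<F_r(m-1)+F_r(n(m+1))$, i.e. $g(r)<0$ for all $r\neq 1$. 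This settles the reducer outright.

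For the suppressor ($n\le h$) and transient amplifier ($n\ge h+2$) I would first assemble three pieces of global data. (i) \emph{Local behavior at $r=1$:} since $\rho_G$ and $\rho_{K_N}$ share the neutral value $1/N$, we have $g(1)=0$, and comparing first-order terms in \eqref{eq:rhoNe} and \eqref{eq:WMexpand} gives $g'(1)=(\Ne-N)/2$, which by \eqref{eq:SHNe} equals $(n-1-h)/2$: negative for $n\le h$, positive for $n\ge h+2$. (ii) \emph{Boundary limits:} as $r\to\infty$, $F_r(x)\to-\ln x$, so $g\to\ln\frac{n(m-1)}{N-1}<0$ in every case; as $r\to0^+$, the leading behavior $F_r(x)\sim x\ln(1/r)$ gives $g\to\pm\infty$ according to the sign of $(m+1)+(N-1)-(m-1)-n(m+1)=h-n+1$, hence $+\infty$ for $n\le h$ and $-\infty$ for $n\ge h+2$. (iii) \emph{Monotonicity in $h$:} $g$ depends on $h$ only through the single argument $N-1=nm+h-1$, so $\partial g/\partial h=F_r'(N-1)$, which by Lemma \ref{lemma_F} is negative for $r>1$ and positive for $0<r<1$. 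Comparing with the reducer value at $h=n-1$ (where $g<0$ everywhere, by the previous paragraph), this monotonicity immediately delivers $g<0$ on $(1,\infty)$ for the suppressor and $g<0$ on $(0,1)$ for the transient amplifier—exactly one of the two required halves in each case.

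The remaining halves—$g>0$ on all of $(0,1)$ for the suppressor, and a single $+\!\to\!-$ sign change on $(1,\infty)$ for the amplifier—are the crux, and I expect this zero-counting to be the main obstacle. Both reduce to showing that $\frac{d}{dt}g=\sum_i \sigma_i\,a_i/(e^{a_i t}-1)$ (with $t=\ln r$, $\sigma=(+,+,-,-)$ on the four arguments $a_i$) changes sign at most once on each of $t>0$ and $t<0$; combined with the endpoint signs of $\frac{d}{dt}g$ already pinned down by (i)--(ii), ``at most one'' forces exactly the needed monotonicity pattern (so $g$ stays positive on $(0,1)$ for the suppressor, and is increasing-then-decreasing on $(1,\infty)$ for the amplifier, producing the unique threshold $r^*$). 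I would attack the sign-change bound by clearing denominators—either writing $\rho_G-\rho_{K_N}$ over its positive common denominator as a finite generalized polynomial in $r^{-1}$, or expanding $1/(e^{a_i t}-1)=\sum_{k\ge1}e^{-k a_i t}$ to present $\frac{d}{dt}g$ as a Dirichlet sum $\sum_j c_j e^{-\lambda_j t}$—and then invoking the variation-diminishing (Descartes-type) bound, which limits real zeros by the number of sign changes in the coefficient sequence ordered by exponent. The delicate part is that the exponents $k a_i$ interleave, so the raw coefficient sequence may exhibit more than one sign change; making the bound actually equal to one will likely require grouping terms or a direct mean-value/convexity comparison of the two increments $F_r(m+1)-F_r(m-1)$ and $F_r(n(m+1))-F_r(N-1)$ as $r$ varies, and this is where the real work lies.
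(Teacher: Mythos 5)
Your setup coincides exactly with the paper's: the same four-point representation $g(r)=F_r(m+1)+F_r(N-1)-F_r(m-1)-F_r\bigl(n(m+1)\bigr)$, the same application of Lemma \ref{lem:convex} (with $a=m-1$, $b=n(m+1)$, $d=2$) for the reducer case $n=h+1$, and your monotonicity-in-$h$ observation is equivalent to the paper's Subcases 2.1 and 3.1 (which use monotonicity of $F_r$ in its argument) to settle the suppressor's half on $(1,\infty)$ and the amplifier's half on $(0,1)$. Up to that point the proposal is correct and matches the paper.

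The gap is that the two halves you defer---which you yourself flag as ``where the real work lies''---are the actual substance of the theorem, and your proposed route to them is neither carried out nor obviously repairable as stated. For the suppressor, your own tools reduce the claim $g>0$ on $(0,1)$ to the boundary case $h=n$ (for $h\ge n+1$, monotonicity of $F_r$ gives $F_r(N-1)\ge F_r(nm+n)$ and $F_r(m+1)>F_r(m-1)$ directly), but the case $h=n$ is genuinely hard and the paper settles it not by zero-counting of $dg/dt$ but by a two-sided estimate: $F_r(m+1)-F_r(m-1)\ge F_r(3)-F_r(1)$ (convexity plus monotonicity, minimized at $m=2$), $F_r(nm+n)-F_r(nm+n-1)<\lim_{x\to\infty}F_r'(x)=-\ln r$ (mean value theorem plus $F_r'$ increasing), and the explicit verification $F_r(3)-F_r(1)+\ln r=\ln\bigl(1+(r-1)^2/(3r)\bigr)>0$. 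For the amplifier's unique crossing $r^*>1$, the paper executes the ``clear denominators'' alternative you mention, but with two ingredients your sketch lacks: it works with the difference of fixation probabilities itself (not with $dg/dt$ or a Dirichlet expansion), obtaining the seven-term polynomial \eqref{eq:SH_transamppoly} whose terms are ordered by exponent except for two adjacent terms that are both negative, so the sign-change count is exactly four; and it pins the multiplicity of the root $r=1$ at exactly three (multiplicity at least two from the factors $(r^x-1)(r^y-1)$, odd multiplicity from the weak-selection sign change), so Descartes' rule leaves exactly one further positive root. Your stated worry---that interleaved exponents inflate the sign-change count in your derivative formulation---is precisely the obstruction, and nothing in the proposal resolves it; without these completions, the suppressor case $n=h$ on $(0,1)$ and the connectedness of the amplification interval remain unproven.
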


\begin{proof}
We wish to compare the fixation probability on the Separated Hubs graph to that of a complete graph of equal size $N=nm+h$.  The ratio of fixation probabilities is 
\begin{equation}
\label{eq:SH_fixprob_ratio}
 \frac{\rho_{SH_{n,m,h}}(r)}{\rho_{K_{nm+h}}(r)}=
 \frac{\displaystyle \left(\frac{1-r^{-(m+1)}}{m+1}\right)\left(\frac{1-r^{-(nm+h-1)}}{nm+h-1}\right)}
 {\displaystyle \left(\frac{1-r^{-(m-1)}}{m-1}\right)\left(\frac{1-r^{-n(m+1)}}{n(m+1)}\right)}.
\end{equation}
The logarithm of this ratio can be written in terms of the function $F_{r}(x)=\ln\left|\frac{1-r^{-x}}{x}\right|$:
\begin{equation}
\label{eq:SH_fixprob_ratio_F}
\ln\left(\frac{\rho_{SH_{n,m,h}}(r)}{\rho_{K_{nm+h}}(r)}\right)= F_{r}(m+1)+F_{r}(nm+h-1)-F_{r}(m-1)-F_{r}(nm +n).
\end{equation}
We prove the cases in increasing order of difficulty.

\begin{case}{Case 1, $n=h+1$}
Substituting $h=n-1$ into Eq.~\eqref{eq:SH_fixprob_ratio_F} gives
\[
\ln\left(\frac{\rho_{SH_{n,m,n-1}}(r)}{\rho_{K_{nm+n-1}}(r)}\right) = F_{r}(m+1)+F_{r}(nm+n-2)-F_{r}(m-1)-F_{r}(nm + n).
\]
By Lemma \ref{lemma_F} $F_r(x)$ is convex in $x$ for all $r \ne 1$.  Applying Lemma \ref{lem:convex} with $a=m-1$, $b=nm+m$, and $d=2$, we obtain
\begin{equation}
\label{eq:SH_reducer_Fineq}
\ln\left(\frac{\rho_{SH_{n,m,n-1}}(r)}{\rho_{K_{nm+n-1}}(r)}\right) =  F_r(m+1)+F_r(nm+n-2) - F_r(m-1)-F_r(nm+n) < 0.
\end{equation}
Therefore $\rho_{SH_{n,m,n-1}}(r) < \rho_{K_{nm+n-1}}(r)$ for all $r \ne 1$, proving $SH_{n,m,h}$ is a reducer in this case. 
\end{case}

\begin{case}{Case 2, $n \le h$}  Here we wish to prove that $SH_{n,m,h}$ is a suppressor. We break into three subcases.

\begin{case}{Subcase 2.1, $r>1$}  By Lemma \ref{lemma_F}, $F_r(x)$ is decreasing in $x$ for $r>1$. Therefore, for $n \le h$ and $r>1$,
\begin{equation}
\label{eq:SH_suppressor_dec_rbig}
F_{r}(nm+h-1) < F_{r}(nm+n-2).
\end{equation}
Combining with the inequality in Eq.~\eqref{eq:SH_reducer_Fineq}, we have
\begin{align*}
\ln\left(\frac{\rho_{SH_{n,m,h}}(r)}{\rho_{K_{nm+h}}(r)}\right) = F_{r}(m+1)+F_{r}(nm+h-1)-F_{r}(m-1)-F_{r}(nm+n)<0.
\end{align*}
Thus $\rho_{SH_{n,m,h}}(r) < \rho_{K_{nm+h}}(r)$ for $n \le h$ and $r>1$. 
\end{case}

\begin{case}{Subcase 2.2, $0<r<1$ and $n \le h-1$} By Lemma \ref{lemma_F}, $F_r(x)$ is increasing in $x$ for $0<r<1$. Therefore, in this subcase, we have $F_{r}(m+1) > F_{r}(m-1)$ and $F_{r}(nm+h-1) \ge F_{r}(nm+n)$. Combining these two inequalities gives
\begin{equation}
\ln\left(\frac{\rho_{SH_{n,m,h}}(r)}{\rho_{K_{nm+h}}(r)}\right)
= F_{r}(m+1)+F_{r}(nm+h-1)-F_{r}(m-1)-F_{r}(nm+n)>0.
\end{equation}
Thus $\rho_{SH_{n,m,h}}(r) > \rho_{K_{nm+h}}(r)$ for $0<r<1$ and $n \le h-1$.
\end{case}

\begin{case}{Subcase 2.3, $0<r<1$ and $n=h$}  Substituting $h=n$ into Eq.~\eqref{eq:SH_fixprob_ratio_F} gives  
\begin{equation}
\label{eq:SH_suppressor_ineq}
\ln\left(\frac{\rho_{SH_{n,m,n}}(r)}{\rho_{K_{nm+n}}(r)}\right) =  F_{r}(m+1)-F_{r}(m-1)-\left(F_{r}(nm+n)-F_{r}(nm+n-1)\right).
\end{equation}
 
First consider the difference $F_{r}(m+1)-F_{r}(m-1)$. By Lemma \ref{lemma_F}, $F_r(x)$ is convex and increasing when $0<r<1$. Thus, $F_{r}(m+1)-F_{r}(m-1)$ is minimized when $m$ is smallest ($m=2$): 
\begin{align}
  F_{r}(m+1)-F_{r}(m-1)&\ge F_{r}(3)-F_{r}(1).
  \label{eq:SH_suppressor_ineqLHS}
  \end{align}
    
Next consider the difference $F_{r}(nm+n)-F_{r}(nm+n-1)$.  By the Mean Value Theorem, there is some $c\in(nm+n-1, nm+n)$ such that
\begin{equation}
\label{eq:MVTsuppressor}
F_r(nm+n) - F_r(nm+n-1) = F_{r}'(c).
\end{equation}
Since $F_{r}'(x)$ is increasing in $x$ for $0<r<1$ by Lemma \ref{lemma_F}, we have
\begin{equation*}
F_{r}'(c) < \lim_{x\to\infty}F'_r(x)=\lim_{x\to\infty}\left(\frac{\ln{\left(r^x\right)}}{x\left(r^{x}-1\right)}-\frac{1}{x}\right)=-\ln{(r)}.
\end{equation*}
Combining with Eq.~\eqref{eq:MVTsuppressor} gives
\begin{equation}
  F_{r}(nm+n)-F_{r}(nm+n-1)< -\ln{(r)}.\label{eq:SH_suppressor_ineqRHS}
  \end{equation}
 Combining Eqs.~\eqref{eq:SH_suppressor_ineq}, \eqref{eq:SH_suppressor_ineqLHS}, and \eqref{eq:SH_suppressor_ineqRHS} gives the inequality
      \begin{equation*}
\ln\left(\frac{\rho_{SH_{n,m,h}}(r)}{\rho_{K_{nm+h}}(r)}\right) > F_{r}(3)-F_{r}(1)+\ln{r}.
      \end{equation*}
      The right-hand side above is equal to $\ln{\left(1+\frac{\left(r-1\right)^2}{3r}\right)}$, which is positive for $0<r<1$.  Therefore $\rho_{SH_{n,m,h}}(r) > \rho_{K_{nm+h}}(r)$ for $0<r<1$ and $n = h$.
\end{case}
      
Combining the three subcases of Case 2, we have shown that when $n \le h$, we have $\rho_{SH_{n,m,h}}(r) > \rho_{K_{nm+h}}(r)$ for $0<r<1$ and $\rho_{SH_{n,m,h}}(r) < \rho_{K_{nm+h}}(r)$ for $r>1$.  This proves that the Separated Hubs graph is a suppressor for $n \le h$.
\end{case}

\begin{case}{Case 3, $n \ge h+2$} Here we wish to prove that $SH_{n,m,h}$ is a transient amplifier. We break into three subcases.

\begin{case}{Subcase 3.1, $0 < r < 1$} By Lemma \ref{lemma_F}, $F_r(x)$ is increasing in $x$ for $0<r<1$.  Since $h \le n-2$, we have $F_{r}(nm+h-1) < F_{r}(nm+n-2)$.  Combining with Eqs.~\eqref{eq:SH_fixprob_ratio_F} and \eqref{eq:SH_reducer_Fineq} gives
\begin{equation}
\ln\left(\frac{\rho_{SH_{n,m,h}}(r)}{\rho_{K_{nm+h}}(r)}\right)= F_r(m+1)+F_r(nm+h-1) - F_r(m-1)-F_r(nm+n) < 0.
\end{equation}
This proves that $\rho_{SH_{n,m,h}}(r)<\rho_{K_{nm+h}}(r)$ for $n \ge h+2$ and $0<r<1$.
\end{case}

\begin{case}{Subcase 3.2, $r>1$} To complete the proof that the Separated Hubs graph is a transient amplifier for $n \ge h+2$, we must show that $\rho_{SH_{n,m,h}}(r) > \rho_{K_{nm+n-1}}(r)$ only on some connected interval $1<r<r^*$.  We already know that $\rho_{SH_{n,m,h}}(1) = \rho_{K_{nm+h}}(1)=1/N$, and that $\rho_{SH_{n,m,h}}$ is an amplifier of weak selection, meaning that $\rho_{SH_{n,m,h}}(r) > \rho_{K_{nm+h}}(r)$ for some interval $1<r<r^*$.  It remains only to prove that the set of $r$-values for which $\rho_{SH_{n,m,h}}(r) > \rho_{K_{nm+h}}(r)$ is connected.  This amounts to showing that the equation $\rho_{SH_{n,m,h}}(r) = \rho_{K_{nm+h}}(r)$ has only one solution with $r>1$.

To show this, we set the ratio of fixation probabilities equal to one:
\begin{equation}
1 =  \frac{\rho_{SH_{n,m,h}}(r)}{\rho_{K_{nm+h}}(r)}=
 \frac{\displaystyle \left(\frac{1-r^{-(m+1)}}{m+1}\right)\left(\frac{1-r^{-(nm+h-1)}}{nm+h-1}\right)}
 {\displaystyle \left(\frac{1-r^{-(m-1)}}{m-1}\right)\left(\frac{1-r^{-n(m+1)}}{n(m+1)}\right)}.
\end{equation}
This is equivalent to 
\begin{align}\label{eq:SH_transamp}
(nm+h-1)\left(r^{m-1}-1\right) \left(r^{nm+n}-1\right) 
- n(m -1)\left(r^{m+1}-1\right) \left(r^{nm+h-1}-1\right)r^{n-h-1}  = 0.
\end{align}
The left-hand side above expands to the polynomial 
\begin{multline}
 (n+h-1)r^{nm+m+n-1}-(nm+h-1)r^{nm+n}+n(m-1)r^{nm+n-2}\\
 +n(m-1)r^{n+m-h}-n(m-1)r^{n-h-1}-(nm+h-1)r^{m-1}+nm+h-1.  \label{eq:SH_transamppoly}
\end{multline}
We will use Descartes' rule of signs \cite{Descartes1637Geometrie} to show that this polynomial has exactly one root greater than 1.
Given $m\ge2$ and $n-2\ge h\ge1$, the terms of the polynomial of Eq.~\eqref{eq:SH_transamppoly} are in descending order of exponent, with the possible exception of the fifth and sixth terms.  Since the quantities $n+h-1,\,nm+h-1,$ and $n(m-1)$ are all positive, there are four sign changes between the consecutive  coefficients of this polynomial. (The coefficients of the fifth and sixth terms are both negative so their ordering does not matter.) By Descartes' rule of signs, this polynomial can have zero, two, or four positive roots, counting multiplicity. From Eq.~\eqref{eq:SH_transamp} we infer that $r=1$ is at least a double root of this polynomial, since both terms of the left-hand side Eq.~\eqref{eq:SH_transamp} contain a factor of the form $(r^x-1)(r^y-1)$ for $x,y \geq 1$.  Furthermore, $r=1$ must be a root of odd multiplicity, since our weak-selection analysis implies a sign change in $ \rho_{SH_{n,m,h}}(r)-\rho_{K_{nm+h}}(r)$ at $r=1$.  We conclude that $r=1$ is a triple root, and that the polynomial of Eq.~\eqref{eq:SH_transamppoly} has exactly one other positive root.  This other root must be greater than 1 since we have shown $ \rho_{SH_{n,m,h}}(r) <\rho_{K_{nm+h}}(r)$ for $0<r<1$.  Therefore, $ \rho_{SH_{n,m,h}}(r) <\rho_{K_{nm+h}}(r)$ only on some connected interval $(1,r^*)$.
\end{case}
Combining the two subcases of Case 3 shows that the Separated Hubs graph is a transient amplifier for $n \ge h+2$. \qedhere
\end{case}
\end{proof}
 
 \subsection{Fan}\label{sect:fan}
The Fan is the $h=1$ case of the Separated Hubs graph; that is, $F_{n,m} = SH_{n,m,1}$.  Substituting $h=1$ into Eqs.~\eqref{eq:SHNenum} and \eqref{eq:SHNedenom} gives the effective population size for the Fan $F_{n,m}$:
\begin{align}
\Ne&=\frac{nm (nm+4 \epsilon -1) (m (nm+\epsilon -1)-\epsilon )}{(m-1) \epsilon ^2 (nm+1)+\epsilon  (m (n+2)-1) (nm-1)+m (nm-1)^2}.
 \end{align}
 For $\epsilon \to 0$, this becomes $\Ne = nm+n-1$, which matches Eq.~\eqref{eq:SHNe}.

For arbitrary mutant fitness $r$, and $\epsilon \to 0$, substituting $h=1$ into Eq.~\eqref{eq:SH_fixprobapp} gives the fixation probability
\begin{equation}
\label{eq:MF_fixprob}
\rho_{F_{n,m}}(r)=\left(\frac{n(m-1)}{nm+1}\right) 
\left(\frac{(1-r^{-1})(1-r^{-(m+1)})}{(1-r^{-(m-1)})(1-r^{-n(m+1)})} \right).
\end{equation}
From Theorem \ref{thm:SHclassify}, the $m$-Fan is a reducer when $n=2$, a suppressor when $n=1$, and a transient amplifier when $n\ge 3$.  

An interesting case is the 2-Fan ($m=2$), with $r>1$ and $n \to \infty$.  In this case, we have
\begin{equation}
    \lim_{n\to\infty}\rho_{F_{n,2}}(r) = \frac{1-r^{-3}}{2}.
\end{equation}

Setting this equal to the large-population limit for a complete graph, $\lim_{N\to \infty}\rho_{K_N}(r)=1-r^{-1}$, allows us to determine the fitness at which the graph is no longer amplifying selection. Solving
\begin{equation}
    \frac{1-r^{-3}}{2} =1-r^{-1},
\end{equation}
we obtain $r=1$ or $r=(1 \pm \sqrt{5})/2$.  Thus the 2-Fan, as the number of blades goes to infinity,   amplifies selection on the interval $1<r\leq\phi$, where $\phi=(1+\sqrt{5})/2$ is the the golden ratio.

\subsection{Star of Islands}
\label{app:SI}
 The Star of Islands graph $SI_{n,m,h}$ consists of $h \geq 2$ ``hub'' vertices and $n\geq 2$ ``islands'', with $m \geq 2$ vertices per island. Within the hub and each island, vertices are connected to one another with weight 1. Each hub vertex is also connected to each island vertex with weight $\epsilon$. 

\subsubsection{Weak Selection}
\label{app:SIweak}
Let $H$ represent a hub vertex and $B$ represent any island vertex. The weighted degree and relative weighted degrees of these vertices are
  \begin{align*}
      w_H&=h-1+nm\epsilon\\
w_B&=m-1+h\epsilon\\
\pi_H&=\frac{w_H}{h w_H+nm w_B}=\frac{h-1+nm\epsilon}{h(h-1+nm\epsilon)+nm(m-1+h\epsilon)}\\
\pi_B&=\frac{w_B}{h w_H+nm w_B}=\frac{m-1+h\epsilon}{h(h-1+nm\epsilon)+nm(m-1+h\epsilon)}.
  \end{align*}  
Furthermore, let $B$ and $B'$ represent any two vertices on the same island and $H$ and $H'$ be represent any two hub vertices. The step probabilities are: 
\begin{align*}
p_{BB'}&=\frac{1}{w_B}=\frac{1}{m-1+h\epsilon}\\
p_{BH}&=\frac{\epsilon}{w_B}=\frac{\epsilon}{m-1+h\epsilon}\\
p_{HB}&=\frac{\epsilon}{w_H}=\frac{\epsilon}{h-1+nm\epsilon}\\
p_{HH'}&=\frac{1}{w_H}=\frac{1}{h-1+nm\epsilon}.
\end{align*}
All other step probabilities are zero.

As for the Separated Hubs graph, there are four coalescence times to consider: $\tau_{HH'}$ for two different hub vertices, $\tau_{HB}$ for a hub and a blade, $\tau_{BB'}$ for two different vertices on a common blade, and $\tau_{BB''}$ for two vertices on different blades.  Eq.~\eqref{eq:trecurapp} becomes  
\begin{align*}
\tau_{HH'}&=1+\frac{1}{2}\big(2nm p_{HB}\tau_{BH}+2(h-2)p_{HH'}\tau_{HH'}\big)\\
\tau_{BH}&=1+\frac{1}{2}\big((m-1)p_{HB}\tau_{BB'}+m(n-1)p_{H,B}\tau_{BB''}+(h-1)p_{HH'}\tau_{BH}+(m-1)p_{BB'}\tau_{BH}\\
&+(h-1)p_{B,H}\tau_{HH'}\big)\\
\tau_{BB'}&=1+\frac{1}{2}\big(2(m-2)p_{BB'}\tau_{BB'}+2hp_{BH}\tau_{BH}\big)\\
\tau_{BB''}&=1+\frac{1}{2}\big(2(m-1)p_{BB'}\tau_{BB''}+2h p_{BH}\tau_{BH}\big).
\end{align*}

Solving this system and substituting into Eqs.~\eqref{eq:tidef} and \eqref{eq:Nedef} yields the effective population size for this graph.  The result for arbitrary $\epsilon$ is rather cumbersome, but in the $\epsilon \to 0$ limit it simplifies to
\begin{equation}
\lim_{\epsilon\rightarrow0}\Ne = N + 
\frac{(m-h)nmh\big(h(h-1)+m(m-1)(n-2) \big)}
{\big(h(h-1)+m(m-1)\big)\big(h(h-1)+m(m-1)n\big)}.
\end{equation}


Fixation probabilities for weak selection can be computed by substituting in Eq.~\eqref{eq:rhoNe}. In the limit of many islands we have
\begin{equation}
\label{eq:SI_nepslimit_weak}
\lim_{n\rightarrow\infty}\lim_{\epsilon\rightarrow0}\rho'=\frac{(m-1)(h+m)}{2(h(h-1)+m(m-1))}.
\end{equation}

To determine the strongest amplifier of weak selection, we observe that Eq.~\eqref{eq:SI_nepslimit_weak} is maximized for $m=h+\sqrt{2h(h-1)}$. Substituting this maximizing value into eq.~\eqref{eq:SI_nepslimit_weak}, we obtain
\begin{align*}
   \max_m \lim_{n\rightarrow\infty}\lim_{\epsilon\rightarrow0}\rho'=
    \frac{-1+h\big(2h-3+2\sqrt{2h(h-1)}\big)}{-2+8h(h-1)}.
\end{align*}
We call the right-hand side $g(h)$. The derivative of $g$ is
\begin{align*}
\frac{dg}{dh}=\frac{1+4h^3-5h+(3-4h^2)\sqrt{2h(h-1)}}{2(h-1)(1+4h-4h^2)^2}.
\end{align*}
Note that since $h\geq2$, the denominator is positive. Moreover, we have $\sqrt{2h(h-1)}\geq h$, and it follows that the numerator of $g'(h)$ is negative:
\begin{align*}
1+4h^3-5h+(3-4h^2)\sqrt{2h(h-1)} \leq 1+4h^3-5h+(3-4h^2)h=1-2h < 0.
\end{align*}
Thus $g(h)$ is decreasing in $h$ and is maximized when $h=2$.  The largest value of $\rho'$ therefore occurs when $h=2$, $m=4$ and $\rho'=\frac{9}{14}$.  For comparison, the complete graph $K_N$ has $\rho'=1/2$ in the $N \to \infty$ limit.

\subsubsection{Nonweak Selection}
\label{app:SInonweak}

Here we compute the fixation probability for the Star of Islands graph in the $\epsilon\to 0$ limit. As in Appendix \ref{sec:SHnonweak}, we use a separation of timescales argument. Here, fixation on the hub or on the islands occurs on a fast timescale, while the spread of types between the hub and the island occurs on a slow timescale.  On the slow timescale, the only relevant states are those for which the hub and islands each contain all mutants or all residents.  We label such states as $(M,k)$ or $(R,k)$, where the first coordinate indicates the type of the hub (M or R), and the second indicate the number $k$ of islands that have been fixed for mutants, $0 \leq k \leq n$.  Dynamics on the slow timescale can be represented as a continuous-time Markov chain.  The transition rate from state $(M,k)$ to $(M, k+1)$ is given by
\begin{equation}
Q_{(M,k) \to (M, k+1)} = \left( \frac{(n-k)m}{nm+h} \right)
\left( \frac{hr\epsilon}{m-1+hr\epsilon} \right)
\left(\frac{m-1}{m} \frac{1-r^{-1}}{1-r^{-(m-1)}}\right).
\end{equation}
The first factor on the right-hand side is the probability that, in a given time-step, a resident on an island vertex dies.  The second factor is the probability that a mutant is chosen to reproduce into the vacant island vertex.  The third factor is the probability that, from this initial mutant invader, the mutant type ultimately achieves fixation on this island. 

Similarly, we define the three other transition rates:
\begin{align}
Q_{(M,k) \to (R, k)} & = \left( \frac{h}{nm+h} \right)
\left( \frac{(n-k)m\epsilon}{(h-1)r+(n-k+kr)m\epsilon} \right)
\left(\frac{h-1}{h} \frac{1-r}{1-r^{h-1}}\right)\\
Q_{(R,k) \to (R, k-1)} &= \left( \frac{km}{nm+h} \right)
\left( \frac{h \epsilon}{(m-1)r+h\epsilon} \right)
\left(\frac{m-1}{m} \frac{1-r}{1-r^{m-1}}\right)\\
Q_{(R,k) \to (M, k)} &= \left( \frac{h}{nm+h} \right)
\left( \frac{mkr \epsilon}{h-1+(n-k+kr)m\epsilon} \right)
\left(\frac{h-1}{h} \frac{1-r^{-1}}{1-r^{-(h-1)}}\right).
\end{align}
All other transitions are impossible, and are therefore assigned rate zero.

We discretize the above Markov chain by defining following transition probabilities, conditioned on leaving the current state, in the $\epsilon \to 0$ limit:
\begin{align}
\nonumber
    P_{(M,k) \to (M, k+1)} & = 
    \lim_{\epsilon\to 0} \frac{Q_{(M,k) \to (M, k+1)}}{Q_{(M,k) \to (M, k+1)}+Q_{(M,k) \to (R, k)}}\\
    &= \frac{hr^m(r^{h-1}-1)}{m(r^{m-1}-1)+ hr^m(r^{h-1}-1)}\\
    \nonumber
    P_{(M,k) \to (R, k)} & = 
    \lim_{\epsilon\to 0} \frac{Q_{(M,k) \to (R, k)}}{Q_{(M,k) \to (M, k+1)}+Q_{(M,k) \to (R, k)}}\\
    &= \frac{m(r^{m-1}-1)}{hr^m(r^{h-1}-1) + m(r^{m-1}-1)}\\
    \nonumber
    P_{(R,k) \to (R, k-1)} & =
    \lim_{\epsilon\to 0} \frac{Q_{(R,k) \to (R, k-1)}}{Q_{(R,k) \to (R, k-1)}+Q_{(R,k) \to (M, k)}}\\
    &=  \frac{h(r^{h-1}-1)}{h(r^{h-1}-1)+mr^h(r^{m-1}-1)}\\
    \nonumber
    P_{(R,k) \to (M, k)} & =
    \lim_{\epsilon\to 0} \frac{Q_{(R,k) \to (M, k)}}{Q_{(R,k) \to (M, k)}+Q_{(R,k) \to (M, k)}}\\
    &= \frac{mr^h(r^{m-1}-1)}{h(r^{h-1}-1)+ mr^h(r^{m-1}-1)}.
\end{align}
We observe that these conditional transition probabilities are independent of $k$.

We now follow the method of analysis that Hadjichrysanthou et al.~\cite{hadjichrysanthou2011evolutionary} developed for the Star graph.  Let $\rho_{(M,k)}$ represent the fixation probability from state $(M,k)$,  and similarly for $\rho_{(R,k)}$.
These fixation probabilities obey the recurrence relations
\begin{align}\label{eq:SI_fixprob_recur}
\begin{split}
\rho_{(M,k)}&= P_{(M,k)\rightarrow(M, k+1)} \, \rho_{(M, k+1)}  + P_{(M,k)\rightarrow(R,k)} \, \rho_{(R,k)}, \,\,0\le k\le n-1,\\
\rho_{(R,k)}&= P_{(R,k)\rightarrow(M,k)} \, \rho_{(M,k)} + P_{(R,k)\rightarrow(R, k-1)} \, \rho_{(R, k-1)},\,\,1\le k\le n,
\end{split}
\end{align}
with boundary conditions $\rho_{(R,0)}=0$ and $\rho_{(M, n)}=1$.

The solution to recurrence relations of the form \eqref{eq:SI_fixprob_recur} was given in Ref.~\cite{hadjichrysanthou2011evolutionary}:
\begin{align*}
\rho_{(M,i)}&= \frac{1+ \sum^{i-1}_{j=1} P_{(M,j)\rightarrow(R,j)}\prod^j_{k=1}\frac{P_{(R,k)\rightarrow(R,k-1)}}{P_{(M,k)\rightarrow(M,k+1)}}}{1+\sum^{n-1}_{j=1}P_{(M,j)\rightarrow(R,j)}\prod^j_{k=1}\frac{P_{(R,k)\rightarrow(R, k-1)}}{P_{(M,k)\rightarrow(M,k+1)}}}\\
\rho_{(R,i)}&= \sum^i_{j=1}P_{(R,j)\rightarrow(M,j)}\,\rho_{
(M,i)}\prod^i_{k=j+1}P_{(R,k)\rightarrow(R,k-1)}.
\end{align*}
In particular, the fixation probabilities starting from a single mutant are:
\begin{align}
\label{eq:rhoR1}
\rho_{(R,1)}&= \frac{P_{(R,1)\rightarrow(M,1)}}{1+\sum^{n-1}_{j=1}P_{(M,j)\rightarrow(R,j)}\prod^j_{k=1}\frac{P_{(R,k)\rightarrow(R,k-1)}}{P_{(M,k)\rightarrow(M,k+1)}}}\\
\label{eq:rhoM0}
\rho_{(M,0)}&= \frac{P_{(M,0)\rightarrow(M,1)}}{1+\sum^{n-1}_{j=1}P_{(M,j)\rightarrow(R,j)}\prod^j_{k=1}\frac{P_{(R,k)\rightarrow(R,k-1)}}{P_{(M,k)\rightarrow(M,k+1)}}}.
\end{align}

Let us define
\begin{equation}
x= \frac{P_{(R,k)\rightarrow(R, k-1)}}{P_{(M,k)\rightarrow(M,k+1)}} = \frac{mr^{-m}(r^{m-1}-1)+h(r^{h-1}-1)}{mr^{h}(r^{m-1}-1)+h(r^{h-1}-1)}.
\end{equation}
Substituting in Eqs.~\eqref{eq:rhoR1} and \eqref{eq:rhoM0}, we obtain
\begin{align}
\rho_{(R,1)}&=\frac{P_{(R,1)\rightarrow(M,1)}}{1+\frac{x^n-x}{x-1}P_{(M,k)\rightarrow(R,k)}}\label{eq:SI_rhoR1}\\
\rho_{(M,0)}&=\frac{P_{(M,0)\rightarrow(M,1)}}{1+\frac{x^n-x}{x-1}P_{(M,k)\rightarrow(R,k)}}.\label{eq:SI_rhoM0}
\end{align}

To obtain the overall fixation probability from a single mutant (in the $\epsilon \to 0$ limit), we must consider two possibilities: (i) the mutation first arises on an island, becomes fixed on that island, and sweeps to fixation from there, or (ii) the mutant first arises on the hub, becomes fixed on the hub, and then sweeps to fixation.  Putting these cases together yields:
\begin{align*}
\rho_{SI_{n,m,h}}(r)=\left( \frac{nm}{nm+h} \right) \left(\frac{m-1}{m} \frac{1-r^{-1}}{1-r^{-(m-1)}}\right)\rho_{(R, 1)}
+\left(\frac{h}{nm+h}\right) \left( \frac{h-1}{h} \frac{1-r^{-1}}{1-r^{-(h-1)}} \right) \rho_{(M, 0)}.
\end{align*}
Substituting from Eqs.~\eqref{eq:SI_rhoR1} and \eqref{eq:SI_rhoM0}, we obtain $\rho_{SI_{n,m,h}}(r)=\text{num/denom}$, where
\begin{multline}
\label{eq:SI_rho1}
    \mathrm{num}=r^m(1-r^{-1})\left(1-r^{-(h+m)}\right)\\
    \times \Big(hr^h\left(1-r^{-(h-1)}\right)\left(nm(m-1)r^m+h(h-1)\right)\\
    +mr^m\left(1-r^{-(m-1)}\right)\left(nm(m-1)+h(h-1)r^h\right)\Big),
\end{multline}
and
\begin{multline}
\label{eq:SI_rho2}
    \mathrm{denom}=(nm+h)\left(h\left(1-r^{-(h-1)}\right) +mr^m\left(1-r^{-(m-1)}\right)\right)\\
    \times\Big(mr^m\left(1-r^{-(m-1)}\right)(1-x^n)+h\left(1-r^{-(h-1)}\right)\left(r^{h+m}-x^n\right)\Big).
\end{multline}

In the limit of many islands, we have
\begin{equation}
\lim_{n \to \infty} \rho_{SI_{n,m,h}}(r) 
= \begin{cases}
0 & 0 \leq r \leq 1\\
\frac{(m-1)(1-r^{-1}) \left(1-r^{-(m-1)} \right)}{hr^{-m}\left(1-r^{-(h-1)}\right) + m\left(1-r^{-(m-1)}\right)}
& r>1.
\end{cases}
\end{equation}

\begin{theorem} Given $m,n>1$ and $\epsilon\rightarrow0$, the Star of Islands is a reducer for $m=h$.
\end{theorem}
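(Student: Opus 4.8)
The plan is to reduce the cumbersome fixation-probability formula \eqref{eq:SI_fixprob2}--\eqref{eq:SI_fixprob4} to a simple closed form in the special case $h=m$, and then to mimic the reducer argument used for the Separated Hubs graph in Case~1 of the proof of Theorem~\ref{thm:SHclassify}. That is, I would express the logarithm of the ratio $\rho_{SI_{n,m,m}}(r)/\rho_{K_N}(r)$ in terms of the function $F_r$, and then invoke the convexity of $F_r$ (Lemma~\ref{lemma_F}) together with Lemma~\ref{lem:convex} to show the ratio lies below $1$ for every $r \neq 1$.

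First I would set $h=m$ throughout \eqref{eq:SI_fixprob2}--\eqref{eq:SI_fixprob4}. The crucial simplification is that the quantity $x$ in \eqref{eq:SI_fixprob4} collapses to $x = r^{-m}$: with $h=m$ both numerator and denominator of \eqref{eq:SI_fixprob4} acquire a common factor $(r^{m-1}-1)$, leaving $x = (1+r^{-m})/(1+r^m) = r^{-m}$. In $\mathrm{num}$, the identity $hr^h = mr^m$ lets the two terms inside the large parenthesis share the factor $mr^m(1-r^{-(m-1)})$, and the remaining bracket collapses to $(n+1)(r^m+1)$ up to the common factor $m(m-1)$; in $\mathrm{denom}$, the analogous regrouping produces $(1+r^m)(r^m-r^{-mn})$. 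After cancelling the shared factors $(n+1)$, $m(1-r^{-(m-1)})$, and $(1+r^m)$, and using $1-r^{-2m}=(1-r^{-m})(1+r^{-m})$ to expose one more factor of $(1+r^m)$, the fixation probability reduces to
\begin{equation*}
\rho_{SI_{n,m,m}}(r) = \frac{(m-1)\,(1-r^{-1})\,(1-r^{-m})}{m\,(1-r^{-(m-1)})\,(1-r^{-m(n+1)})}.
\end{equation*}
A quick numerical sanity check (e.g.\ $m=n=2$, $r=2$, which gives $8/21$) confirms this reduction before proceeding.

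Writing $N = nm + h = m(n+1)$, I would then form the ratio with the complete-graph value \eqref{eq:WM}. The factor $(1-r^{-1})$ cancels, and after matching the polynomial prefactors one finds
\begin{equation*}
\ln\!\left(\frac{\rho_{SI_{n,m,m}}(r)}{\rho_{K_N}(r)}\right) = F_r(m) + F_r(N-1) - F_r(m-1) - F_r(N),
\end{equation*}
valid for all $r\neq 1$ since every factor $\tfrac{1-r^{-x}}{x}$ carries the same sign, making the ratio positive. The four arguments $m-1,\,m,\,N-1,\,N$ have exactly the symmetric form required by Lemma~\ref{lem:convex}: taking $a=m-1$, $b=N$, $d=1$ gives $a+d=m$ and $b-d=N-1$, and the hypothesis $0<d<(b-a)/2$ holds because $N-(m+1)=nm-1>0$ for $m,n\ge 2$. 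Since $F_r$ is strictly convex (Lemma~\ref{lemma_F}), Lemma~\ref{lem:convex} yields $F_r(m)+F_r(N-1) < F_r(m-1)+F_r(N)$, so the log-ratio is negative and $\rho_{SI_{n,m,m}}(r)<\rho_{K_N}(r)$ for all $r\neq1$, i.e.\ the Star of Islands with $m=h$ is a reducer. I expect the main obstacle to be the bookkeeping of the algebraic collapse in the first step: the two multi-term parentheses of $\mathrm{num}$ and $\mathrm{denom}$ must be factored just so, and the cleanest route is to record $x=r^{-m}$ and the coincidence $hr^h=mr^m$ at the outset, since every subsequent cancellation hinges on these.
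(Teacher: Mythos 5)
Your proposal is correct and follows essentially the same route as the paper: both reduce the $h=m$ ratio $\rho_{SI_{n,m,m}}(r)/\rho_{K_{nm+m}}(r)$ to the expression whose logarithm is $F_r(m)+F_r(nm+m-1)-F_r(m-1)-F_r(nm+m)$, and then apply Lemma~\ref{lemma_F} and Lemma~\ref{lem:convex} with exactly the same choices $a=m-1$, $b=nm+m$, $d=1$. The only difference is presentational: you spell out the algebraic collapse (including the key simplification $x=r^{-m}$) that the paper performs silently before stating the ratio.
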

\begin{proof}
We wish to compare fixation probabilities on the Star of Islands graph, $SI_{n,m,h}$, to those on a complete graph of equal size, $K_{nm+h}$, in the case $m=h$.  Substituting $h=m$ into \eqref{eq:SI_rho1} and \eqref{eq:SI_rho2}, and comparing to the fixation probability for $K_{nm+m}$ from the main text, we obtain
\begin{equation*}
\frac{\rho_{SI_{n,m,m}}(r)}{\rho_{K_{nm+m}}(r)}
=\frac{\displaystyle \left(\frac{1-r^{-(nm+m-1)}}{nm+m-1}\right)\left(\frac{1-r^{-m}}{m}\right)}
{\displaystyle \left(\frac{1-r^{-(m-1)}}{m-1}\right)\left(\frac{1-r^{-m(n+1)}}{m(n+1)}\right)}.
\end{equation*}
The logarithm of this ratio can be written in terms of the function $F_{r}(x)=\ln\left|\frac{1-r^{-x}}{x}\right|$:
\begin{equation*}
\ln\left(\frac{\rho_{SI_{n,m,m}}(r)}{\rho_{K_{nm+m}}(r)}\right)= F_{r}(nm+m-1)+F_{r}(m)-F_{r}(m-1)-F_{r}(nm+m).
\end{equation*}

From Lemma \ref{lemma_F}, we know that $F_{r}(x)$ is strictly convex in $x$ for all $r\ne 1$. Invoking Lemma \ref{lem:convex} with $a=m-1$, $b=nm+m$, and $d=1$, we see that,
\begin{equation}
F_{r}(m)+F_{r}(nm+m-1)<F_{r}(m-1)+F_{r}(nm+m),
\end{equation}
and therefore $\ln\left(\frac{\rho_{SI_{n,m,m}}(r)}{\rho_{K_{nm+m}}(r)}\right)<0$, for all $r \ne 1$, which proves that Star of Islands graph is a reducer for $m=h$.
\end{proof}

\end{document}